\newtheorem{remark}{Remark} 
\newtheorem{theorem}{Theorem}
\numberwithin{theorem}{subsection} 
\newtheorem{lemma}{Lemma} 
\numberwithin{lemma}{subsection} 
\newtheorem{proof}{Proof}
\numberwithin{proof}{subsection} 
\title{Model based fractional order controller design for process plants satisfying desired robustness criteria}
\author{
  Pushkar Prakash Arya
\\
  Department of Electrical Engineering\\
  Indian Institute of Technology Roorkee\\
  Roorkee, INDIA \\
  \texttt{parya@ee.iitr.ac.in} \\
   \And
Sohom Chakrabarty \\
  Department of Electrical Engineering\\
  Indian Institute of Technology Roorkee\\
Roorkee, INDIA \\
\texttt{sohomfee@iitr.ac.in} \\
}
\begin{document}
\maketitle

\begin{abstract}
This paper contributes to the design of a fractional order (FO) internal model controller (IMC) for a first order plus time delay (FOPTD) process model to satisfy a given set of desired robustness specifications in terms of gain margin $(A_m)$ and phase margin $(\phi_{m})$. 
The highlight of the design is the choice of a fractional order (FO) filter in the IMC structure which has two  parameters ($\lambda$ and $\beta$) to tune as compared to only one tuning parameter ($\lambda$) for traditionally used integer order (IO) filter.
These parameters are evaluated for the controller, so that $A_m$ and $\phi_{m}$ can be chosen independently. 
A new methodology is proposed to find a complete solution for controller parameters, the methodology also gives the system gain cross-over frequency ($\omega_{g}$) and phase cross-over frequency ($\omega_{p}$). Moreover, the solution is found without any approximation of the delay term appearing in the controller.
\end{abstract}

\keywords{Internal Model Control, fractional order control, gain margin, phase margin, first order plus time delay process model.}

\section{Introduction}
The IMC control strategy was proposed by M.Morari et al. in 1982 \cite{CEGarcia_1982_paper1, DRivera_1986_paper4}.
The IMC control structure as shown in Fig.\ref{figure_IMC_strucutre}(a) constitutes the inverse of the minimum phase part of the process model $\hat{G}_m(s)$ which is augmented with a filter $\Theta(s)$ and is given as $Q(s) = \hat{G}_m(s)/\Theta(s)$ \cite{MMorari_RobustPC_Book_1989}, such that $\mathop {\lim }\limits_{\omega  \to 0} \Theta (j\omega ) = 1$. Usually the filter is considered as $\Theta(s) = 1/(\lambda s+1)^r$ where $r$ is so chosen that $Q(s)$ becomes bi-proper. The parameter $\lambda$ is used to tune the controller to get the desired closed loop response of the system \cite{MMorari_RobustPC_Book_1989}. This is the standard IO  filter design problem for the IMC, where any value of $\lambda$ in the filter provides some $A_m$ and $\phi_m$ to the closed loop system, or in other words, for desired $A_m$ and $\phi_m$, parameter $\lambda$ is evaluated.

To the best of author's knowledge, only two major contributions \cite{IKaya_IMC_PI_GPM_2004, WKHo_IMC_PID_GPM_2001}  are present in literature which design IMC controller to satisfy $A_m$ and $\phi_{m}$ specifications simultaneously, whereas \cite{CWChu_IMCPID_Adaptive_2011} implemented the work in \cite{WKHo_IMC_PID_GPM_2001} in adaptive control setting. All three of them are IO-IMC for FOPTD processes.

The major limitation of these controllers is the presence of only one tuning parameter $\lambda$ which limits the domain of selection of desired $A_m$ and $\phi_{m}$. The achievable $A_m$ and $\phi_{m}$ are related through a mathematical expression which represents a curve (say, $A_m-\phi_{m}$ curve) in a 2-D \textcolor{black}{space} and desired $A_m$ and $\phi_{m}$ can be selected from that curve only. Also, the possible $\phi_m$ is restricted to  $(0,\pi/2)$. 

\textcolor{black}{To overcome these issues an FO filter $\Theta(s)=1/(\lambda s^{\beta}+1)$ is considered in this work which introduces an additional parameter $\beta$ into the controller. With two tuning parameters $(\lambda, \beta )$ instead of only $\lambda$ as in IO filter, the range of selection of desired $A_m$ and $\phi_{m}$ becomes a 2-D surface and hence it becomes possible to select $A_m$ and $\phi_{m}$ independently.}.

In this paper, a new solution method is developed to simultaneously satisfy $A_m \ge 2$ and $\phi_{m} \in(0,\pi)$ for $\theta/\tau > 0$, where $\theta$ is the delay and $\tau$ is the time constant of the plant, for FOPTD processes. The methodology also provides the solution of the system $\omega_{g}$ and $\omega_{p}$ considering them as the transitional variable.
It is also the first time that, the solution is attempted without any approximation of the delay term in the controller.

The contents of this manuscript are as follows: Section-II contains complete controller design methodology, derivations, proofs and controller design steps. Section-III, analysis on disturbance rejection with the proposed controller is given. The proposed methodology is verified with an example in Section-IV and Section-V is dedicated to discussion and conclusions. 
\section{Internal Model Controller}
\label{sec:main}
\begin{figure} 
	\centering
	\subfigure[]{\includegraphics[width=2.4in] {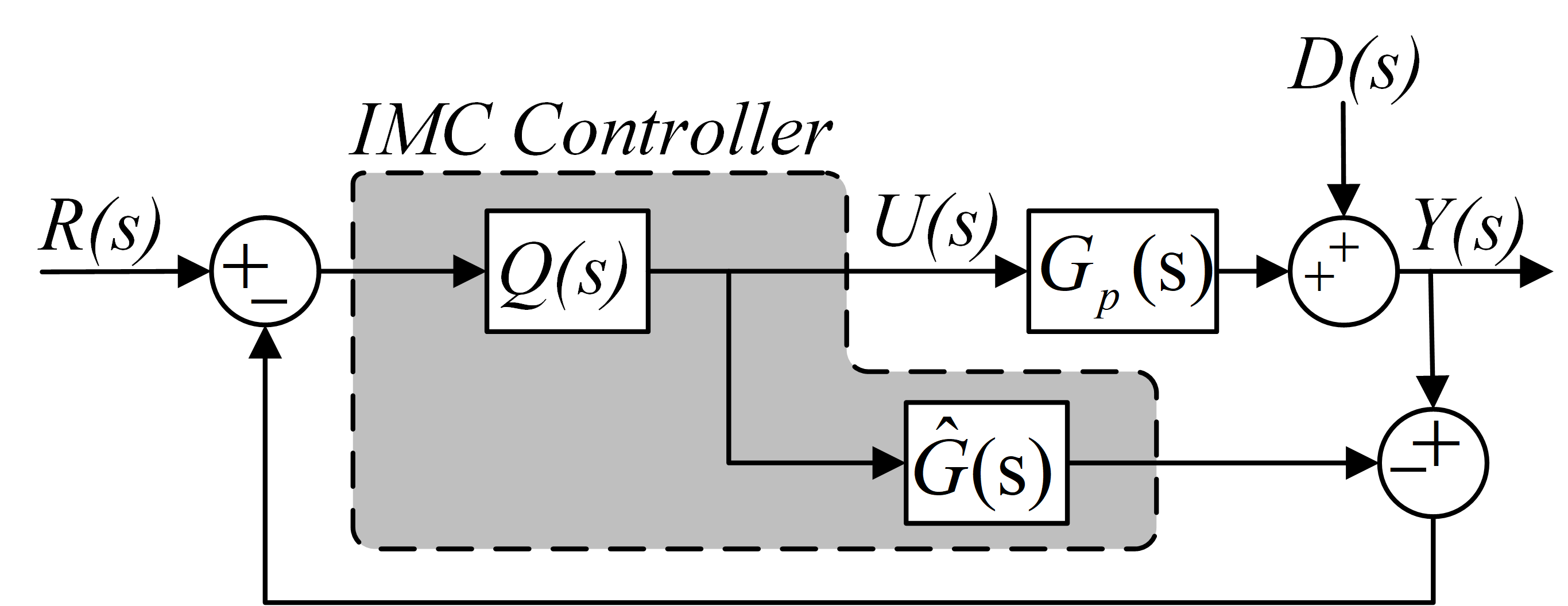}}
	\subfigure[]{\includegraphics[width=2.3in] {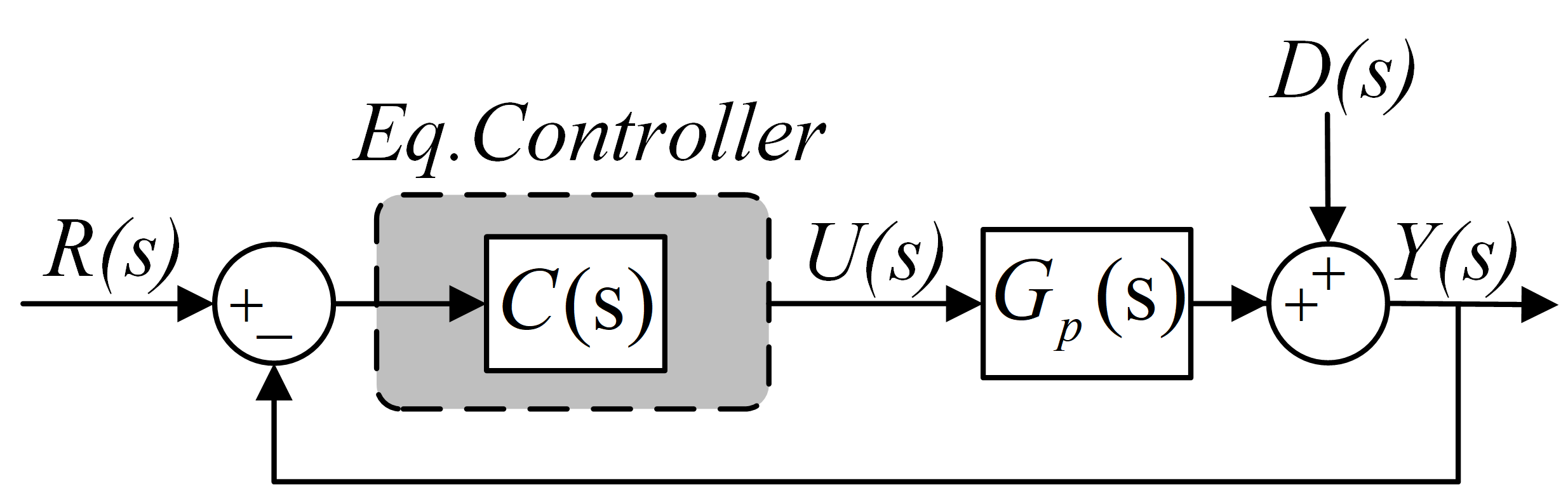}}
	\caption{(a) IMC control and (b) Equivalent classical control}
	\label{figure_IMC_strucutre}
\end{figure}
In Fig.1(a), $Q(s)$ is the IMC controller, ${G_p}(s)$ is the process and  $\hat G(s)$ is the model of the process used in the control loop. In Fig.\ref{figure_IMC_strucutre}(b), $C(s)$ is the equivalent classical controller obtained by block diagram reduction in the structure of  Fig.\ref{figure_IMC_strucutre}(a) and is given as $C(s) = Q(s)/(1-Q(s)\hat{G}(s))$.
Assuming plant behavior as FOPTD and exact modeling as per \cite{MMorari_RobustPC_Book_1989}, $\hat{G}(s)$ can be written as
\begin{equation} \label{Gps_equal_hat_Gs}
\hat{G}(s) =G_p(s) =  \frac{k}{\tau s+1} e^{-\theta s}
\end{equation}
Segregating the minimum phase (MP) and non-minimum phase (NMP) part of the process and the model, we have
$G_p(s)=G_{pm}(s)G_{pa}(s)$ and $\hat{G}(s) = \hat{G}_m(s) \hat{G}_a(s)$,
where subscript $'a'$ represents NMP or all-pass part and subscript $'m'$ represents MP part of the transfer function. Since $\hat{G}(s)=G_p(s)$ therefore,
\begin{equation} \label{G_hat_a=k/tau+1}
\hat{G}_{m}(s) = G_{pm}(s) = \frac{k}{\tau s + 1}; \qquad \hat{G}_{a}(s) = G_{pa}(s) = e^{-\theta s}
\end{equation} 
The IMC controller $Q(s)$ is given as \cite{MMorari_RobustPC_Book_1989}
\begin{equation} \label{Q=f/hatGm}
Q(s) = \frac{\Theta(s)}{{\hat G_m}(s)}
\end{equation}
where $\Theta(s)$ is an FO filter chosen in such a way that $Q(s)$ is realizable 
and  $ \mathop {\lim }\limits_{s \to 0} {\Theta}(s) = 1 $
\cite{MMorari_RobustPC_Book_1989}.
The FO filter considered in this paper is 
\begin{equation} \label{IMC_filter_Theta}
\Theta(s) = \frac{1}{{\lambda {s^\beta } + 1}}; \quad \beta  \in (0,2), \, \lambda > 0
\end{equation}
\textcolor{black}{where $\beta$ denoting the fractional order is an additional parameter along with the filter constant $\lambda$.}
\subsection{$A_m$ and $\phi_{m}$ specifications:}
Substituting $\hat{G}_m(s)$ from (\ref{G_hat_a=k/tau+1}) and $\Theta(s)$ from  (\ref{IMC_filter_Theta}) in
(\ref{Q=f/hatGm}), we get
\begin{equation} \label{Qs_final}
Q(s) = \frac{\tau s +1}{k(\lambda s^\beta +1)}
\end{equation}
The equivalent controller in Fig.1(b) can be obtained by substituting \textcolor{black}{$Q(s)$ from} (\ref{Qs_final}) and $\hat{G}(s)$ from (\ref{Gps_equal_hat_Gs}) in 
$C(s)= Q(s)/(1-Q(s)\hat{G}(s))$, and we get
\begin{equation} \label{Cs}
C(s) = \frac{1}{{\lambda {s^\beta } + 1 - {e^{ - \theta s}}}} \cdot \frac{1}{{{{\hat G}_m}(s)}}
\end{equation}
The open loop transfer function based on Fig.1(b) is $L(s) = C(s)G_p(s)$ considering $D(s) = 0$, since $D(s)$ has no influence on the 
$A_m$ and $\phi_{m}$ specifications.
Substituting $C(s)$ from (\ref{Cs}) and $G_p(s)$ from (\ref{Gps_equal_hat_Gs}), the open loop transfer function (OLTF) becomes
\begin{equation} \label{OL_TF}
L(s) = \frac{{{e^{ - \theta s}}}}{{\lambda {s^\beta } +1 - {e^{ - \theta s}}}}
\end{equation}
For an OLTF $L(s)$, the $A_m$ and $\phi_{m}$ specifications are given as
\begin{align} \label{PM_condition_1}
{\left. {L(j\omega )} \right|_{\omega  = {\omega _g}}} =  - {e^{j{\phi _m}}} \\
\label{GM_condition_2}
\qquad
{\left. {L(j\omega )} \right|_{\omega  = {\omega _p}}} = \frac{{ - 1}}{{{A_m}}}
\end{align}
where $\omega_g$ is gain cross-over frequency and $\omega_p$ is phase cross-over frequency of the closed loop system. To have desired $A_m$ and $\phi_m$, we need to find $\lambda$ and $\beta$ which  satisfy (\ref{PM_condition_1}) and (\ref{GM_condition_2}) simultaneously.  
Substituting $s = j \omega$ in (\ref{OL_TF}), we get
\begin{align} \label{Ol_TF_jw}
L(j\omega ) = \frac{{{e^{ - j\theta \omega }}}}{\begin{array}{c}
	\left( {1 + \lambda {\omega ^\beta }\cos \left( {\frac{{\beta \pi }}{2}} \right) - \cos \left( {\theta \omega } \right)} \right) + 
	j\left( {\lambda {\omega ^\beta }\sin \left( {\frac{{\beta \pi }}{2}} \right) + \sin \left( {\theta \omega } \right)} \right)
	\end{array}}
\end{align}
Form (\ref{Ol_TF_jw}) and (\ref{PM_condition_1}), we have
\begin{align*}
\begin{array}{*{20}{c}}
\begin{array}{c}
\Rightarrow \left( {1 + \lambda \omega _g^\beta \cos \left( {\frac{{\beta \pi }}{2}} \right) - \cos \left( {\theta {\omega _g}} \right)} \right) + 
j\left( {\lambda \omega _g^\beta \sin \left( {\frac{{\beta \pi }}{2}} \right) + \sin \left( {\theta {\omega _g}} \right)} \right)
\end{array}&{ =  - {e^{ - j\left( {{\phi _m} + \theta {\omega _g}} \right)}}}
\end{array}
\end{align*}
\begin{align} \label{L_spec1_in_real_imag}
\begin{array}{c}
\Rightarrow \left( {1 + \lambda \omega _g^\beta \cos \left( {\frac{{\beta \pi }}{2}} \right) - \cos \left( {\theta {\omega _g}} \right)} \right) + 
j\left( {\lambda \omega _g^\beta \sin \left( {\frac{{\beta \pi }}{2}} \right) + \sin \left( {\theta {\omega _g}} \right)} \right)
=  - \cos \left( {{\phi _m} + \theta {\omega _g}} \right) + j\sin \left( {{\phi _m} + \theta {\omega _g}} \right)
\end{array}
\end{align}
Equating real and imaginary part in (\ref{L_spec1_in_real_imag}), we get
\begin{align} \label{PM_condition_eq1_tosolve}
{1 + \lambda \omega _g^\beta \cos \left( {\frac{{\beta \pi }}{2}} \right) - \cos \left( {\theta {\omega _g}} \right)}  =  - \cos \left( {{\phi _m} + \theta {\omega _g}} \right) \\
\label{PM_condition_eq2_tosolve}
\text{and} \qquad
\lambda \omega _g^\beta \sin \left( {\frac{{\beta \pi }}{2}} \right) + \sin \left( {\theta {\omega _g}} \right) = \sin \left( {{\phi _m} + \theta {\omega _g}} \right)
\end{align}
Similarly, from (\ref{Ol_TF_jw}) and (\ref{GM_condition_2}) we get
\[\begin{array}{*{20}{c}}
\left( {1 + \lambda \omega _p^\beta \cos \left( {\frac{{\beta \pi }}{2}} \right) - \cos \left( {\theta {\omega _p}} \right)} \right) + 
j\left( {\lambda \omega _p^\beta \sin \left( {\frac{{\beta \pi }}{2}} \right) + \sin \left( {\theta {\omega _p}} \right)} \right) 
 =  - {A_m}{e^{ - j\theta {\omega _p}}}
\end{array}\]
\begin{align} \label{L_spec2_in_real_imag}
\Rightarrow \left( {1 + \lambda \omega _p^\beta \cos \left( {\frac{{\beta \pi }}{2}} \right) - \cos \left( {\theta {\omega _p}} \right)} \right)
+ j\left( {\lambda \omega _p^\beta \sin \left( {\frac{{\beta \pi }}{2}} \right) + \sin \left( {\theta {\omega _p}} \right)} \right)
=  - {A_m}\cos \left( {\theta {\omega _p}} \right)
+ j{A_m}\sin \left( {\theta {\omega _p}} \right)
\end{align}
Equating real and imaginary part in (\ref{L_spec2_in_real_imag}), we get
\begin{align} \label{GM_condition_eq3_tosolve}
1 + \lambda \omega _p^\beta \cos \left( {\frac{{\beta \pi }}{2}} \right) - \cos \left( {\theta {\omega _p}} \right) =  - {A_m}\cos \left( {\theta {\omega _p}} \right) \\
\label{GM_condition_eq4_tosolve}
\text{and} \qquad
{\lambda \omega _p^\beta \sin \left( {\frac{{\beta \pi }}{2}} \right) + \sin \left( {\theta {\omega _p}} \right)} = {A_m}\sin \left( {\theta {\omega _p}} \right)
\end{align}

The problem has four non-linear transcendental equations (\ref{PM_condition_eq1_tosolve}), (\ref{PM_condition_eq2_tosolve}), (\ref{GM_condition_eq3_tosolve}), and (\ref{GM_condition_eq4_tosolve}) with four unknowns $\lambda$, $\beta$, $\omega_g$ \& $\omega_p$, where $\lambda$ and $\beta$ are to be evaluated such that the desired $\phi_m$ and $A_m$ are satisfied simultaneously, $\omega_g$ and $\omega_p$ begin transitional variables. 
The solution of equations (\ref{PM_condition_eq1_tosolve}) and (\ref{PM_condition_eq2_tosolve}) shall give $\lambda$ and $\beta$ such that it satisfies (\ref{PM_condition_1}) or in other words it satisfies $\phi_m$ and $\omega_g$ simultaneously. 
Similarly, solution of equations (\ref{GM_condition_eq3_tosolve}) and (\ref{GM_condition_eq4_tosolve}) shall give the values of $\lambda$ and $\beta$ such that it satisfies (\ref{GM_condition_2}) or equivalently, it satisfies $A_m$ and $\omega_{p}$ simultaneously. 

\subsection{Design Philosophy:}
\textcolor{black}{
	Let us denote $\lambda=\lambda_a$ in (\ref{L_spec1_in_real_imag}). Then $\lambda$ in (\ref{PM_condition_eq1_tosolve}) and (\ref{PM_condition_eq2_tosolve}) becomes $\lambda_a$ as they come from the same equation  (\ref{L_spec1_in_real_imag}). Similarly, denoting $\lambda = \lambda_b$ in (\ref{L_spec2_in_real_imag}), $\lambda$ in (\ref{GM_condition_eq3_tosolve}) and (\ref{GM_condition_eq4_tosolve}) becomes $\lambda_b$.
}

Then the solution of (\ref{PM_condition_eq1_tosolve}) and  (\ref{PM_condition_eq2_tosolve}) would give a set $\Delta_a$, containing $\{ \lambda_a, \beta_{\omega_g} \}$ corresponding
to a set $\Xi_a$ which contains those $\{\phi_{m}, \omega_{g}\}$ which
satisfy (\ref{L_spec1_in_real_imag}). Similarly, the solution of (\ref{GM_condition_eq3_tosolve}) and (\ref{GM_condition_eq4_tosolve}) will give a set $\Delta_b$, containing $\{  \lambda_b, \beta_{\omega_p} \}$ corresponding to a set $\Xi_b$ which contains those $\{A_m, \omega_p \}$ which 
satisfy (\ref{L_spec2_in_real_imag}). 
Then the intersection set $\Delta_c = \Delta_a \cap  \Delta_b$ shall contain $\{ \lambda_c^{\star},\beta_c^{\star} \}$ which satisfy both (\ref{L_spec1_in_real_imag}) and (\ref{L_spec2_in_real_imag}),
for a given $\{A_m^{\star}, \phi_m^{\star}\}$, where $A_m^{\star}$ and $\phi_m^{\star}$ are elements from the set $\Xi_c = \Xi_a \cap \Xi_b$ associated with some corresponding $\omega_{p}^{*}$ and $\omega_{g}^{*}$. 

To obtain the solution, first we assume $\beta_{\omega_{g}}=\beta\in(0,2)$ and find corresponding $\omega_{g}$ in terms of $\beta_{{\omega _g} }$ from (\ref{PM_condition_eq1_tosolve}) and (\ref{PM_condition_eq2_tosolve}) by eliminating $\lambda_a $. Substituting this $\{ \omega_{g}, \beta_{{\omega _g}} \}$ in (\ref{PM_condition_eq1_tosolve}) or (\ref{PM_condition_eq2_tosolve}), we get corresponding $\lambda_a$ values. Similarly we consider $\beta_{\omega_{p}}=\beta\in(0,2)$ and find $\omega_p$ in terms of $\beta_{{\omega _p}}$ from (\ref{GM_condition_eq3_tosolve}) and (\ref{GM_condition_eq4_tosolve}) by eliminating $\lambda_b$. 
Substituting this $\{ \omega_{p}, \beta_{{\omega _p}} \}$ in (\ref{GM_condition_eq3_tosolve}) or (\ref{GM_condition_eq4_tosolve}), we get corresponding $\lambda_b$.
Therefore,  $\{  \lambda_a, \beta_{\omega_{g}} \}$ lead to a set $\Xi_a$ which contains all $\{\phi_m, \omega_g\}$ which  $\{ \lambda_a, \beta_{\omega_{g}} \}$ can satisfy. Similarly, $\{ \lambda_b,\beta_{{\omega _p}} \}$ lead to a set $\Xi_b$ which contains all $\{A_m, \omega_{p}\}$ which $\{ \lambda_b,\beta_{\omega_{p}} \}$ can satisfy. 
Then $\beta_{\omega_{g}} vs. \lambda_a$ and $\beta_{\omega_{p}} vs. \lambda_b$ are plotted together to find the intersection of the curve which gives $\lambda^{\star}$ and $\beta^{\star}$ which satisfies $\phi_{m}^{\star}$ at some $\omega_{g}^{\star}$ and $A_m^{\star}$ at some $\omega_p^{\star}$.

\subsection{Finding $\omega_{g} \in \Xi_a$:}
Eliminating $\lambda$ from (\ref{PM_condition_eq1_tosolve}) and (\ref{PM_condition_eq2_tosolve}), we get
\begin{equation} \label{wg_from_PM_condn}
\tan \left( {\frac{{\beta \pi }}{2}} \right) = \frac{{\sin \left( {{\phi _m} + \theta {\omega _g}} \right) - \sin \left( {\theta {\omega _g}} \right)}}{{\cos \left( {\theta {\omega _g}} \right) - \cos \left( {{\phi _m} + \theta {\omega _g}} \right) - 1}}
\end{equation}

In (\ref{wg_from_PM_condn}), $\beta$ and $\omega_{g}$ are unknown, whereas \textcolor{black}{$\phi_{m}$ is given as the desired phase margin} \textcolor{black}{and $\theta$  is given from  the system model.} However, \textcolor{black}{since} the range of $\beta$ is \textcolor{black}{fixed}, therefore, $\omega_g$ can be found in terms of $\beta$.

Cross multiplying with numerator and denominator terms in LHS and RHS of the equation in (\ref{wg_from_PM_condn}), we get
\begin{align} \label{wg_from_PM_condn_simplify_1}
\begin{array}{c}
\tan \left( {\frac{{\beta \pi }}{2}} \right)\cos \left( {\theta {\omega _g}} \right) - \tan \left( {\frac{{\beta \pi }}{2}} \right)\cos \left( {{\phi _m} + \theta {\omega _g}} \right)  - \tan \left( {\frac{{\beta \pi }}{2}} \right)  = 
\sin \left( {{\phi _m} + \theta {\omega _g}} \right) - \sin \left( {\theta {\omega _g}} \right)
\end{array}
\end{align}
Using trigonometric identities
and simplifying, we can write (\ref{wg_from_PM_condn_simplify_1}) as
\begin{align}  \label{wg_equation}
a_1\cos \left( {\theta {\omega _g}} \right) + b_1\sin \left( {\theta {\omega _g}} \right) = c_1
\end{align}
where
\begin{align}  \label{a1b1c1_initial}
\begin{array}{*{20}{c}}
{{a_1} = \tan \left( {\frac{{\beta \pi }}{2}} \right)\left( {1 - \cos \left( {{\phi _m}} \right)} \right) - \sin \left( {{\phi _m}} \right)} ; \, 
{{b_1} = \tan \left( {\frac{{\beta \pi }}{2}} \right)\sin \left( {{\phi _m}} \right) + \left( {1 - \cos \left( {{\phi _m}} \right)} \right)}; \,
{{c_1} = \tan \left( {\frac{{\beta \pi }}{2}} \right)}
\end{array}
\end{align}
In (\ref{wg_equation}), let $a_1 = r_1\cos(\alpha_1)$ \& $b_1 = r_1\sin(\alpha_1)$ where $\alpha_1 \in \Re$ and $\alpha_1$ is in radians. Then, $r_1=\sqrt{a_1^{2}+b_1^{2}}$ and $\alpha_1 = \tan^{-1}\left( \frac{b_1}{a_1}\right)$. This transforms (\ref{wg_equation}) as
\begin{equation} \label{omega_g_initial}
\cos \left( {\alpha_1  - \theta {\omega _g}} \right) = \frac{c_1}{r_1}
\end{equation}
Note that $\alpha_1$, $c_1$ and $r_1$ ultimately depend only on $\beta$.
\begin{lemma}	[Simplification for $c_1/r_1$] 
	If $a_1$, $b_1$ and $c_1$ are as given in (\ref{a1b1c1_initial}), then 
	\begin{equation} \label{c_1/r_1_final}
	\frac{c_1}{r_1} = \frac{{\sin \left( {\frac{{\beta \pi }}{2}} \right)}}{{2\sin \left( {\frac{{{\phi _m}}}{2}} \right)}}; \quad  \beta  \in \left( {0,2} \right)
	\end{equation}
\end{lemma}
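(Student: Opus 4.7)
The plan is to compute $r_1^2 = a_1^2 + b_1^2$ directly from the definitions in (\ref{a1b1c1_initial}) and show that the expression factors very cleanly, after which the claimed ratio will fall out in a single step.

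First I would introduce the shorthands $t = \tan(\beta\pi/2)$, $s = \sin\phi_m$, $u = 1-\cos\phi_m$, so that $a_1 = tu - s$ and $b_1 = ts + u$. Expanding squares,
\begin{equation*}
a_1^2 + b_1^2 = (tu-s)^2 + (ts+u)^2 = t^2(u^2+s^2) - 2tus + (s^2+u^2) + 2tus = (t^2+1)(s^2+u^2),
\end{equation*}
so the $\pm 2tus$ cross terms cancel. This factorization is the whole point; everything else is the standard library of half-angle identities.

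Next I would simplify each factor. For the $\phi_m$ factor, using $\sin^2\phi_m + \cos^2\phi_m = 1$,
\begin{equation*}
s^2 + u^2 = \sin^2\phi_m + (1-\cos\phi_m)^2 = 2 - 2\cos\phi_m = 4\sin^2(\phi_m/2).
\end{equation*}
For the $\beta$ factor, $t^2 + 1 = \sec^2(\beta\pi/2) = 1/\cos^2(\beta\pi/2)$. Hence
\begin{equation*}
r_1 = \sqrt{a_1^2+b_1^2} = \frac{2\sin(\phi_m/2)}{\lvert\cos(\beta\pi/2)\rvert},
\end{equation*}
taking $\sin(\phi_m/2) > 0$ since $\phi_m \in (0,\pi)$. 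Dividing,
\begin{equation*}
\frac{c_1}{r_1} = \frac{\tan(\beta\pi/2)\,\lvert\cos(\beta\pi/2)\rvert}{2\sin(\phi_m/2)} = \frac{\sin(\beta\pi/2)}{2\sin(\phi_m/2)}
\end{equation*}
on $\beta \in (0,1)$, where $\cos(\beta\pi/2) > 0$, giving the lemma as stated.

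The only genuinely subtle point, and the one I expect to be the obstacle, is the case $\beta \in (1,2)$ where $\cos(\beta\pi/2) < 0$ while $\sin(\beta\pi/2) > 0$ and hence $\tan(\beta\pi/2) < 0$. A naive absolute value would flip a sign. However $\alpha_1 = \tan^{-1}(b_1/a_1)$ in (\ref{omega_g_initial}) is defined only up to the branch of $\tan^{-1}$, and in converting $a_1\cos(\theta\omega_g)+b_1\sin(\theta\omega_g) = c_1$ into $r_1\cos(\alpha_1-\theta\omega_g) = c_1$ one is free to choose $r_1$ with either sign provided $\alpha_1$ is shifted by $\pi$. I would therefore argue that selecting the branch of $(r_1,\alpha_1)$ so that $r_1$ carries the sign $\operatorname{sgn}(\cos(\beta\pi/2))$ removes the absolute value and gives the stated formula uniformly on $\beta\in(0,2)\setminus\{1\}$, with the $\beta=1$ case handled by continuity (there $c_1 = \tan(\pi/2)$ is formally infinite but the ratio $c_1/r_1$ has the finite limit $1/(2\sin(\phi_m/2))$, matching $\sin(\pi/2)/(2\sin(\phi_m/2))$).
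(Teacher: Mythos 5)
Your computation is the same as the paper's: the paper also arrives at $a_1^2+b_1^2=\bigl(2\sin(\phi_m/2)/\cos(\beta\pi/2)\bigr)^2$ and divides $c_1$ by the resulting $r_1$, so the factorization $(t^2+1)(s^2+u^2)$ you exhibit is exactly the ``further simplification'' the paper leaves implicit. The sign issue you isolate for $\beta\in(1,2)$ is real, and it is precisely the step the paper's proof elides: the paper writes $r_1=\sqrt{a_1^2+b_1^2}=2\sin(\phi_m/2)/\cos(\beta\pi/2)$ with no absolute value, a quantity that is negative for $\beta\in(1,2)$ and hence cannot literally be a square root; in effect the paper silently adopts the signed-$r_1$ convention that you make explicit, and with $r_1$ defined as the nonnegative root the lemma as stated would be false on $(1,2)$ since $c_1<0$ there while the right-hand side is positive. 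Your repair --- give $r_1$ the sign of $\cos(\beta\pi/2)$ and absorb the flip into a shift of $\alpha_1$ by $\pi$ --- is legitimate and makes the displayed identity hold on $\beta\in(0,2)\setminus\{1\}$, with $\beta=1$ by continuity as you say; the one caveat is that this choice pins down the branch of $\alpha_1$ (it forces $\alpha_1=\pi-(\beta\pi/2+\phi_m/2)$ throughout), so the same convention must then be carried consistently into Lemma 2 and Theorem 1, where the paper's case $\eta\in(0,\pi/2)$ in fact corresponds to the opposite sign choice.
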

\begin{proof}
	Substituting $a_1$ and $b_1$ from (\ref{a1b1c1_initial}), \textcolor{black}{and further simplification of the trigonometric terms, we get}
	\begin{align}
	{a_1^2} + {b_1^2} = {\left( {\frac{{2\sin \left( {\frac{{{\phi _m}}}{2}} \right)}}{{\cos \left( {\frac{{\beta \pi }}{2}} \right)}}} \right)^2}
	\end{align}
	Hence $r_1=\sqrt{a_1^2+b_1^2}= \frac{{2\sin \left( {\frac{{{\phi _m}}}{2}} \right)}}{{\cos \left( {\frac{{\beta \pi }}{2}} \right)}}$.
	\textcolor{black}{Then $	\frac{c_1}{r_1} = \frac{{\sin \left( {\frac{{\beta \pi }}{2}} \right)}}{{2\sin \left( {\frac{{{\phi _m}}}{2}} \right)}}$.}
\end{proof}
From (\ref{c_1/r_1_final}), it is clear that $\phi_m=0$ is not possible as at $\phi_m=0$, $c_1/r_1\to\infty$ and for such case the solution of $\omega_g$ from (\ref{omega_g_initial}) does not exist.
\begin{lemma}[Simplification for $\alpha_1$]
	If $a_1$ and $b_1$ are as given in (\ref{a1b1c1_initial}), then
	\begin{equation} \label{alpha1_final}
	\alpha_1  = \left\{ {\begin{array}{*{20}{c}}
		{ - \left( {\frac{{\beta \pi }}{2} + \frac{{{\phi _m}}}{2}} \right)}& \textcolor{black}{{\frac{{\beta \pi }}{2} + \frac{{{\phi _m}}}{2} \in \left( {0,\frac{\pi }{2}} \right)}}\\
		{\pi  - \left( {\frac{{\beta \pi }}{2} + \frac{{{\phi _m}}}{2}} \right)}&{\frac{{\beta \pi }}{2} + \frac{{{\phi _m}}}{2} \in \left( {\frac{\pi }{2},\frac{{3\pi }}{2}} \right)}
		\end{array}} \right.
	\end{equation}
\end{lemma}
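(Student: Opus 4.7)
The plan is to reduce $\alpha_1 = \tan^{-1}(b_1/a_1)$ to a function of $\beta\pi/2+\phi_m/2$ by exploiting the same half-angle identities used in the preceding lemma, and then to handle the branch selection of $\tan^{-1}$ carefully. First, I would rewrite $1-\cos\phi_m = 2\sin^2(\phi_m/2)$ and $\sin\phi_m = 2\sin(\phi_m/2)\cos(\phi_m/2)$ inside the definitions in~(\ref{a1b1c1_initial}), which pulls out a common factor of $2\sin(\phi_m/2)$:
\begin{align*}
a_1 &= 2\sin(\phi_m/2)\bigl[\tan(\beta\pi/2)\sin(\phi_m/2) - \cos(\phi_m/2)\bigr],\\
b_1 &= 2\sin(\phi_m/2)\bigl[\tan(\beta\pi/2)\cos(\phi_m/2) + \sin(\phi_m/2)\bigr].
\end{align*}

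Next, I would multiply each bracketed expression by $\cos(\beta\pi/2)$, which converts $\tan(\beta\pi/2)$ into $\sin(\beta\pi/2)$ and casts the resulting expressions as angle-sum formulas: the $a_1$ bracket becomes $-\cos(\beta\pi/2+\phi_m/2)$ while the $b_1$ bracket becomes $\sin(\beta\pi/2+\phi_m/2)$. Taking the ratio $b_1/a_1$ cancels the common factor $2\sin(\phi_m/2)/\cos(\beta\pi/2)$ and yields
\[
\frac{b_1}{a_1} \;=\; -\tan\!\left(\frac{\beta\pi}{2}+\frac{\phi_m}{2}\right) \;=\; \tan\!\left(-\Bigl(\tfrac{\beta\pi}{2}+\tfrac{\phi_m}{2}\Bigr)\right),
\]
so $\alpha_1 \equiv -(\beta\pi/2+\phi_m/2) \pmod{\pi}$.

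To finish, I would pin down $\alpha_1$ by choosing the unique $\pmod \pi$ representative that lies in the principal range $(-\pi/2, \pi/2)$ of $\tan^{-1}$. If $\beta\pi/2+\phi_m/2 \in (0, \pi/2)$, then $-(\beta\pi/2+\phi_m/2) \in (-\pi/2, 0)$ is already in the principal range, giving the first branch. If instead $\beta\pi/2+\phi_m/2 \in (\pi/2, 3\pi/2)$, then $-(\beta\pi/2+\phi_m/2)\in(-3\pi/2,-\pi/2)$, and a single $+\pi$ shift places the representative inside $(-\pi/2, \pi/2)$, producing $\pi - (\beta\pi/2+\phi_m/2)$, the second branch.

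The main obstacle is this last step rather than the algebra: since $\tan^{-1}$ discards the quadrant information of $(a_1, b_1)$, one must verify in each subcase that the chosen $+k\pi$ shift actually lands in $(-\pi/2,\pi/2)$ and must handle the singular point $\beta\pi/2+\phi_m/2 = \pi/2$ (where $a_1 = 0$) either by continuous extension or by deliberate exclusion, using the fact that $\beta\in(0,2)$ and $\phi_m\in(0,\pi)$ so that the two case intervals together cover every admissible value of $\beta\pi/2+\phi_m/2$ except that isolated point.
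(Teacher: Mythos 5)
Your proposal is correct and follows essentially the same route as the paper: both arguments reduce $b_1/a_1$ to $-\tan\left(\frac{\beta\pi}{2}+\frac{\phi_m}{2}\right)$ (you via half-angle factoring and the sine/cosine addition formulas, the paper via the tangent addition formula) and then resolve the $\tan^{-1}$ branch according to whether $\frac{\beta\pi}{2}+\frac{\phi_m}{2}$ lies in $\left(0,\frac{\pi}{2}\right)$ or $\left(\frac{\pi}{2},\frac{3\pi}{2}\right)$. Your explicit handling of the excluded point $\frac{\beta\pi}{2}+\frac{\phi_m}{2}=\frac{\pi}{2}$, where $a_1=0$, is a minor refinement the paper leaves implicit.
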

\begin{proof}
	Since $\alpha_1  = \tan ^{ - 1} {b_1}/{a_1}$, substituting $a_1$ and $b_1$ from (\ref{a1b1c1_initial}), 
	\textcolor{black}{and using trigonometric identities for simplification, we get}
	$$ \alpha_1  = {\tan ^{ - 1}}\left( {\frac{{\tan \left( {\frac{{\beta \pi }}{2}} \right) + \tan \left( {\frac{{{\phi _m}}}{2}} \right)}}{{\tan \left( {\frac{{\beta \pi }}{2}} \right)\tan \left( {\frac{{{\phi _m}}}{2}} \right) - 1}}} \right)$$
	Therefore,
	\begin{equation} \label{alpha_in_atan_tan}
	\alpha_1  = {\tan ^{ - 1}}\left( { - \tan \left( {\frac{{\beta \pi }}{2} + \frac{{{\phi _m}}}{2}} \right)} \right) =  - {\tan ^{ - 1}}\left( {\tan \left( {\frac{{\beta \pi }}{2} + \frac{{{\phi _m}}}{2}} \right)} \right)
	\end{equation}
	From fundamentals of inverse trigonometry, ${\tan ^{ - 1}}\left( {\tan \left( x  \right)} \right) = x$, only if
	$x \in (-\pi/2, \pi/2)$. If $x$ lies outside this range, the origin needs to be shifted to the desired domain of the argument $x$ to get the correct result. Therefore, we obtain
	\begin{equation*} 
	\alpha_1  = \left\{ {\begin{array}{*{20}{c}}
		{ - \left( {\frac{{\beta \pi }}{2} + \frac{{{\phi _m}}}{2}} \right)}&{\frac{{\beta \pi }}{2} + \frac{{{\phi _m}}}{2} \in \left( {0,\frac{\pi }{2}} \right)}\\
		{\pi  - \left( {\frac{{\beta \pi }}{2} + \frac{{{\phi _m}}}{2}} \right)}&{\frac{{\beta \pi }}{2} + \frac{{{\phi _m}}}{2} \in \left( {\frac{\pi }{2},\frac{{3\pi }}{2}} \right)}
		\end{array}} \right.
	\end{equation*}
\end{proof}
\begin{theorem} [Solution of $\omega_{g}$]
	The gain cross-over frequency $\omega_{g}$ which satisfies (\ref{PM_condition_1}) can be found as
	\begin{equation} \label{wg_sol_final_expanded}
	{\omega _g} = \left\{ {\begin{array}{*{20}{c}}
		{\frac{1}{\theta }\left( { - {{\cos }^{ - 1}}\left( {\frac{{\sin \left( {\frac{{\beta \pi }}{2}} \right)}}{{2\sin \left( {\frac{{{\phi _m}}}{2}} \right)}}} \right) - \left( {\frac{{\beta \pi }}{2} + \frac{{{\phi _m}}}{2}} \right)} \right)} \\ 
		\textcolor{black}{{;\;\beta  \in \left( {0,{\beta _{x1}}} \right)}} \\ 
		{\frac{1}{\theta }\left( { - {{\cos }^{ - 1}}\left( {\frac{{\sin \left( {\frac{{\beta \pi }}{2}} \right)}}{{2\sin \left( {\frac{{{\phi _m}}}{2}} \right)}}} \right) + \pi  - \left( {\frac{{\beta \pi }}{2} + \frac{{{\phi _m}}}{2}} \right)} \right)} \\ 
		{;\;\beta  \in \left( {{\beta _{x1}},{\beta _{x3}}} \right)} 
		\end{array}} \right.
	\end{equation}
	where ${{\beta _{x1}}} = (\pi-\phi_{m})/\pi$ and ${{\beta _{x3}}} = (3\pi-\phi_{m})/\pi$.
\end{theorem}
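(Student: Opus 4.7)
The plan is to invert the single trigonometric equation (\ref{omega_g_initial}), $\cos(\alpha_1 - \theta\omega_g) = c_1/r_1$, by plugging in the two simplifications just proved. Substituting $c_1/r_1 = \sin(\beta\pi/2)/(2\sin(\phi_m/2))$ from the first lemma produces exactly the argument of $\cos^{-1}$ in the claimed formula, so the remaining work is to solve for $\omega_g$ and to identify the correct branch of $\cos^{-1}$ consistent with a positive frequency.

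Taking $\cos^{-1}$ of both sides yields $\alpha_1 - \theta\omega_g = \pm\cos^{-1}(c_1/r_1) + 2k\pi$, i.e., $\theta\omega_g = \alpha_1 \mp \cos^{-1}(c_1/r_1) - 2k\pi$. The piecewise structure of the theorem then arises directly from the piecewise expression for $\alpha_1$ established in the second lemma. For $\beta\pi/2 + \phi_m/2 \in (0,\pi/2)$, I would substitute $\alpha_1 = -(\beta\pi/2 + \phi_m/2)$, which produces the first branch of the claimed formula; for $\beta\pi/2 + \phi_m/2 \in (\pi/2, 3\pi/2)$, substituting $\alpha_1 = \pi - (\beta\pi/2 + \phi_m/2)$ produces the second. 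To translate these conditions on $\alpha_1$'s argument into conditions on $\beta$ alone (with $\phi_m$ fixed and $\beta > 0$), I would simply invert linearly: $\beta\pi/2 + \phi_m/2 < \pi/2$ gives $\beta < (\pi-\phi_m)/\pi = \beta_{x1}$, and $\pi/2 < \beta\pi/2 + \phi_m/2 < 3\pi/2$ gives $\beta \in (\beta_{x1}, (3\pi-\phi_m)/\pi) = (\beta_{x1}, \beta_{x3})$, recovering exactly the intervals in the theorem.

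The main obstacle I anticipate is the branch-selection step. The equation in (\ref{omega_g_initial}) admits countably many solutions because $\cos$ is $2\pi$-periodic and even, and since the $(r_1,\alpha_1)$ substitution is not one-to-one on $\mathbb{R}$, not every candidate root is a genuine root of the original phase-margin condition (\ref{PM_condition_1}). I would address this by restricting to the branch that gives $\omega_g > 0$ in each regime, using continuity of $\omega_g$ in $\beta$ across the boundary $\beta = \beta_{x1}$ (where $\alpha_1$ jumps by $\pi$ but $\omega_g$ must remain continuous, so the discontinuity must be absorbed by a matching change in the $\pm\cos^{-1}$ sign or in $2k\pi$), and checking a convenient limit such as $\beta \to 0^+$, where $c_1/r_1 \to 0$ forces $\cos^{-1}(c_1/r_1) \to \pi/2$, to pin down the signs unambiguously. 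Once these sign and period choices are verified, the two branches of the theorem follow by direct substitution of the two Lemma~1.2 expressions for $\alpha_1$.
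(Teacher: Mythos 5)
Your derivation is essentially the paper's own proof: write $\omega_g=\frac{1}{\theta}\left(-\cos^{-1}(c_1/r_1)+\alpha_1\right)$, substitute the two lemmas, and convert the conditions on $\eta=\beta\pi/2+\phi_m/2$ into the intervals $(0,\beta_{x1})$ and $(\beta_{x1},\beta_{x3})$ by linear inversion (the paper additionally truncates the lower endpoint $-\phi_m/\pi$ to $0$ since $\beta>0$). Your concern about the multivaluedness of the inversion is legitimate and the paper simply asserts the $-\cos^{-1}$ branch without comment; however, your proposed fix of selecting the branch by requiring $\omega_g>0$ would not reproduce the theorem as stated, because the first case of (\ref{wg_sol_final_expanded}) is a sum of two nonpositive terms and hence always yields $\omega_g<0$ --- positivity is deliberately deferred to Theorems 3 and 4, which discard exactly that regime. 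So the branch choice here must be justified by consistency with the original equations (\ref{PM_condition_eq1_tosolve})--(\ref{PM_condition_eq2_tosolve}) rather than by sign of $\omega_g$; otherwise your argument matches the paper's.
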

\begin{proof}
	From (\ref{omega_g_initial}), $\omega_{g}$ becomes
	\begin{equation} \label{wg_sol_final}
	{\omega _g} = \frac{1}{\theta }\left\{ { - {{\cos }^{ - 1}}\left( {\frac{{{c_1}}}{{{r_1}}}} \right) + {\alpha _1}} \right\}
	\end{equation}
	Substituting $\cos^{-1}(c_1/r_1)$ from (\ref{c_1/r_1_final}) and  $\alpha_1$ from (\ref{alpha1_final}) in (\ref{wg_sol_final}), $\omega_{g}$ is given as
	\begin{equation} \label{wg_sol_final_expanded_1}
	{\omega _g} = \left\{ {\begin{array}{*{20}{c}}
		{\frac{1}{\theta }\left( { - {{\cos }^{ - 1}}\left( {\frac{{\sin \left( {\frac{{\beta \pi }}{2}} \right)}}{{2\sin \left( {\frac{{{\phi _m}}}{2}} \right)}}} \right) - \left( {\frac{{\beta \pi }}{2} + \frac{{{\phi _m}}}{2}} \right)} \right)}\\
		\textcolor{black}{{;\;\frac{{\beta \pi }}{2} + \frac{{{\phi _m}}}{2} \in \left( {0,\frac{\pi }{2}} \right)}}\\
		{\frac{1}{\theta }\left( { - {{\cos }^{ - 1}}\left( {\frac{{\sin \left( {\frac{{\beta \pi }}{2}} \right)}}{{2\sin \left( {\frac{{{\phi _m}}}{2}} \right)}}} \right) + \pi  - \left( {\frac{{\beta \pi }}{2} + \frac{{{\phi _m}}}{2}} \right)} \right)}\\
		{;\;\frac{{\beta \pi }}{2} + \frac{{{\phi _m}}}{2} \in \left( {\frac{\pi }{2},\frac{{3\pi }}{2}} \right)}
		\end{array}} \right.
	\end{equation}
	Simplifying the domain in (\ref{wg_sol_final_expanded_1}) for $\beta$, we get
	$\beta \in \textcolor{black}{(-\phi_{m}/\pi , (\pi-\phi_{m})/\pi )}$ for the first case and $\beta \in ((\pi-\phi_{m})/\pi, (3\pi-\phi_{m})/\pi )$ for the second case. However, $\beta\in(0,2)$ therefore, the lower bound for the first case will become zero. Therefore, the first \textcolor{black}{domain} will become $\beta \in \textcolor{black}{(0 , \beta_{x1})}$ and the second \textcolor{black}{domain} remains as it is, i.e., $\beta\in(\beta_{x1}, \beta_{x3})$.
\end{proof}
\subsection{Finding $\beta_{\omega_{g}}$ from set $\Xi_a$:}
In practice, control systems  always have real and positive gain cross-over frequency.
Let \textcolor{black}{${\rm B}_{\omega_{g}}$ be the set of} $\beta$ which gives some desired $\omega_{g}$ which is real and positive, i.e, ${{\rm B} _{\omega_{g}}} = \left\{ {\beta :{\omega _g} \in {\Re ^ + }} \right\}$. 

From (\ref{wg_sol_final_expanded_1}), the solution of $\omega_{g}$ depends on $\beta$ and $\phi_m$. However, $\phi_m$ is \textcolor{black}{given by} design specification, therefore, $\beta$ can be found in terms of $\phi_m$ such that $\omega_g \in \Re^+$.
This can be \textcolor{black}{done} in two steps. First, $\beta$ is obtained so that $\omega_g \in \Re$\textcolor{black}{, i.e.,  set} $ {{\rm B} _{\omega g\Re }} = \left\{ {\beta :{\omega _g} \in \Re } \right\} $ is evaluated.
\textcolor{black}{Thereafter,} set $ {{\rm B} _{\omega g + }} = \left\{ {\beta :{\omega _g} > 0} \right\} $ is obtained. Finally, $ {{\rm B} _{\omega g}} = {{\rm B} _{\omega g\Re}} \cap {{\rm B} _{\omega g + }} $. 
\begin{theorem}[ 
	\textcolor{black}{Finding $\rm B_{\omega_{g}\Re}$}]
	If $c_1/r_1$ is as given in (\ref{c_1/r_1_final}), then the solution of $\beta$ such that $\omega_{g}\in\Re$ is given by:\\
	(a)	 ${{\rm B}_{{\omega _g}\Re }} = \left\{ {\beta :\beta  \in (0,{\beta _{{\omega _g}\Re 1}}) \cup ({\beta _{{\omega _g}\Re 2}},2)} \right\}$ \textcolor{black}{when} $\phi_{m}\in(0,\pi/3)$ and \\
	(b)  \textcolor{black}{${{\rm B}_{{\omega _g}\Re }} = \left\{ {\beta :\beta  \in (0,2)} \right\}$} \textcolor{black}{when} $\phi_{m} \in [\pi/3, \pi)$, where
	\begin{align} \label{beta_wg_max_01}
	{\beta _{{\omega _g}\Re 1}} = \frac{2}{\pi }{\sin ^{ - 1}}\left( {2\sin \left( {\frac{{{\phi _m}}}{2}} \right)} \right)\\
	\label{beta_wg_Re_2_final}
	and \qquad
	{{\beta _{{\omega _g}\Re 2}} = \frac{2}{\pi }\left( {\pi  - {{\sin }^{ - 1}}\left( {2\sin \left( {\frac{{{\phi _m}}}{2}} \right)} \right)} \right)}
	\end{align}
\end{theorem}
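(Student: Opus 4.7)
The plan is to read off the condition for $\omega_{g}$ to be real directly from the closed form (\ref{wg_sol_final_expanded_1}). Every term outside the $\cos^{-1}(\cdot)$ factor is a real-valued affine function of $\beta$ and $\phi_{m}$, so the only possible obstruction to $\omega_{g}\in\Re$ is the argument of the inverse cosine lying outside $[-1,1]$. By Lemma 2.3.1 that argument equals $c_{1}/r_{1}=\sin(\beta\pi/2)/(2\sin(\phi_{m}/2))$, so the whole theorem reduces to characterising those $\beta\in(0,2)$ for which
\[
\left|\frac{\sin(\beta\pi/2)}{2\sin(\phi_{m}/2)}\right|\le 1.
\]

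Next I would remove the absolute value. For $\beta\in(0,2)$ we have $\beta\pi/2\in(0,\pi)$, hence $\sin(\beta\pi/2)>0$, and for $\phi_{m}\in(0,\pi)$ we have $\sin(\phi_{m}/2)>0$. Thus $c_{1}/r_{1}>0$ throughout, and the condition collapses to the single inequality
\[
\sin(\beta\pi/2)\le 2\sin(\phi_{m}/2).
\]
This is the only inequality I need to solve, split according to whether the right-hand side exceeds $1$ or not.

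For part (b), observe that $2\sin(\phi_{m}/2)\ge 1$ precisely when $\phi_{m}/2\ge\pi/6$, i.e.\ $\phi_{m}\in[\pi/3,\pi)$. In that range $\sin(\beta\pi/2)\le 1\le 2\sin(\phi_{m}/2)$ holds automatically, so every $\beta\in(0,2)$ is admissible, giving ${\rm B}_{\omega_{g}\Re}=(0,2)$. For part (a), i.e.\ $\phi_{m}\in(0,\pi/3)$, the bound $2\sin(\phi_{m}/2)<1$ is a genuine constraint. On $(0,\pi)$ the map $x\mapsto\sin x$ rises monotonically from $0$ to $1$ on $(0,\pi/2]$ and falls monotonically from $1$ to $0$ on $[\pi/2,\pi)$, with the symmetry $\sin x=\sin(\pi-x)$. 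Therefore the preimage of $(0,2\sin(\phi_{m}/2)]$ under $\sin$ restricted to $(0,\pi)$ is the disjoint union $(0,\sin^{-1}(2\sin(\phi_{m}/2))]\cup[\pi-\sin^{-1}(2\sin(\phi_{m}/2)),\pi)$. Substituting $x=\beta\pi/2$ and scaling by $2/\pi$ produces exactly the two endpoints $\beta_{\omega_{g}\Re 1}$ and $\beta_{\omega_{g}\Re 2}$ of (\ref{beta_wg_max_01}) and (\ref{beta_wg_Re_2_final}).

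There is no real obstacle here; the only point requiring care is the two-branch inversion of sine on $(0,\pi)$, which is why the admissible $\beta$ splits into two open intervals in case (a) but into a single interval in case (b). I would finish by noting that the critical value $\phi_{m}=\pi/3$ is exactly where the two endpoints $\beta_{\omega_{g}\Re 1}$ and $\beta_{\omega_{g}\Re 2}$ coalesce at $\beta=1$, so the two cases match continuously at the boundary.
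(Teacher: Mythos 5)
Your proposal is correct and follows essentially the same route as the paper: reduce $\omega_{g}\in\Re$ to $\left|c_{1}/r_{1}\right|\le 1$, drop the absolute value since both numerator and denominator are positive, and solve $\sin(\beta\pi/2)\le 2\sin(\phi_{m}/2)$ with the split at $\phi_{m}=\pi/3$. The only difference is cosmetic: where the paper justifies the two cutoff points graphically (Fig.~2) and then handles the $\beta\in(0,1]$ and $\beta\in(1,2)$ branches of $\arcsin$ separately by shifting the origin, you invoke the monotonicity and symmetry of $\sin$ on $(0,\pi)$ directly, which is a slightly cleaner way to obtain the same two endpoints.
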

\begin{proof}
	From (\ref{wg_sol_final}), $\omega_g \in \Re$ if $\left| c_1/r_1 \right| \le 1$, where
	$c_1/r_1$ is given in (\ref{c_1/r_1_final}). Thus,
	\begin{equation} \label{c_1/r_1_exp_final_1}
	\left| {\frac{{\sin \left( {\frac{{\beta \pi }}{2}} \right)}}{{2\sin \left( {\frac{{{\phi _m}}}{2}} \right)}}} \right| \le 1
	\end{equation}
	Since $c_1/r_1>0\,\forall\,\beta\in(0,2)$ and $\phi_{m}\in(0,\pi)$, therefore, modulus sign can be \textcolor{black}{removed} from (\ref{c_1/r_1_exp_final_1}), and we get
	\begin{equation} \label{c_1/r_1_exp_final}
	\sin \left( {\frac{{\beta \pi }}{2}} \right) \le 2\sin \left( {\frac{{{\phi _m}}}{2}} \right)
	\end{equation} 
	We need to find $\beta$ such that (\ref{c_1/r_1_exp_final}) holds.
	\begin{figure}  
		\centering
		\includegraphics[width=2in] {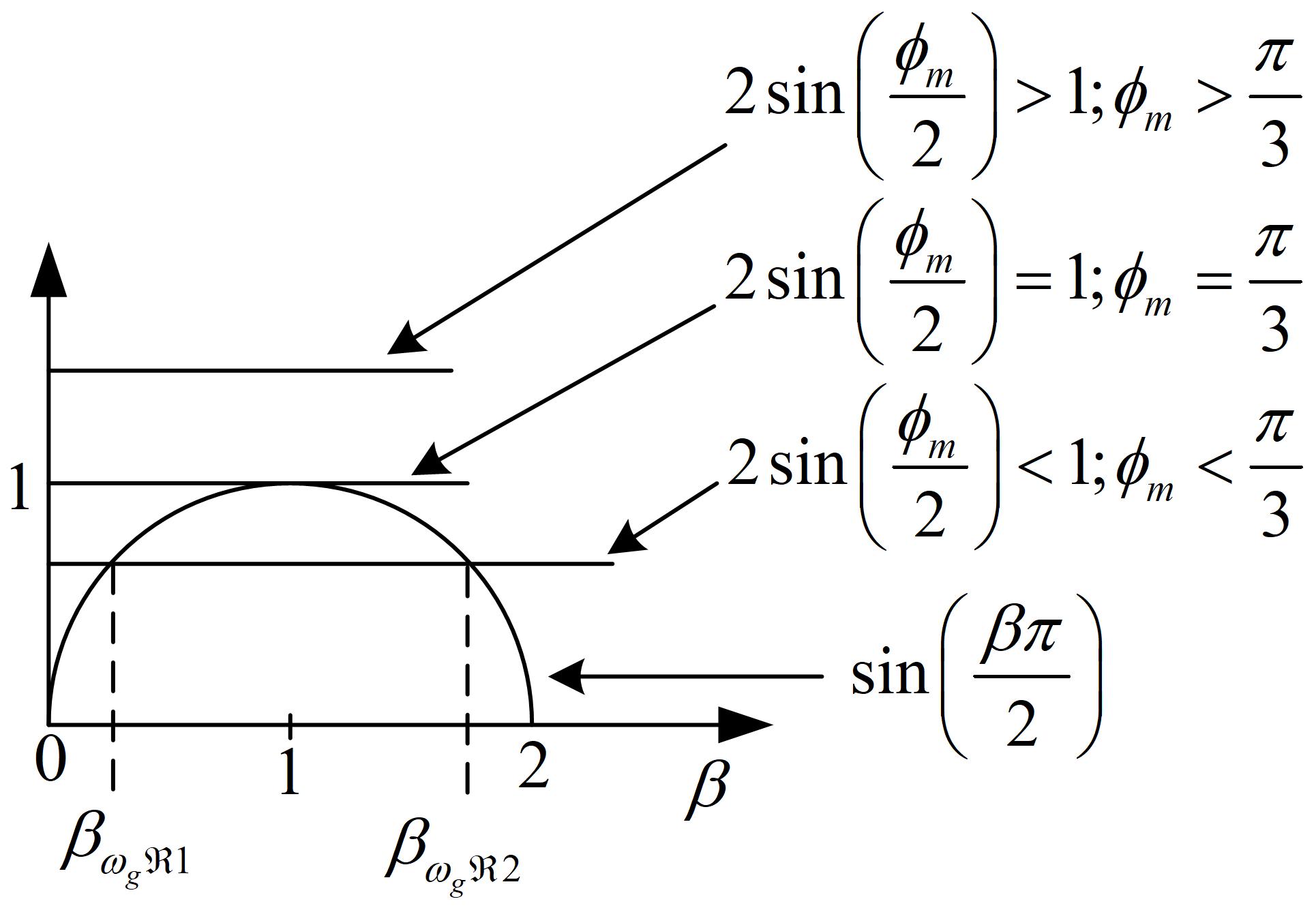}
		\caption{Graphical representation of (\ref{c_1/r_1_exp_final})}
		\label{fig_c_1/r_1_graphical}
	\end{figure}
	In (\ref{c_1/r_1_exp_final}), $\beta$ can be considered as a variable and $\phi_{m}$ as an arbitrary constant given by design specification.
	\textcolor{black}{ Plotting LHS and RHS of (\ref{c_1/r_1_exp_final}) in}
	Fig.\ref{fig_c_1/r_1_graphical}, two cutoff points of $\beta$ arise, namely $\beta_{{\omega _g}\Re 1}$ and $\beta_{{\omega _g}\Re 2}$ when $\phi_{m}\in(0,\pi/3)$. 
	At $\phi_m=\pi/3$, ${\beta _{{\omega _g}\Re 1}} = {\beta _{{\omega _g}\Re 2}} = 1$ and for $\phi_{m}\in[\pi/3,\pi)$, we have $\beta\in(0,2)$.
	Therefore, if $\phi_{m}\in(0,\pi/3)$, $\beta \in (0,{\beta _{{\omega _g}\Re 1}}) \cup ({\beta _{{\omega _g}\Re 2}},2)$ .
	The boundary values ${\beta _{{\omega _g}\Re 1}}$ and ${\beta _{{\omega _g}\Re 2}}$ can be found by equating LHS and RHS in (\ref{c_1/r_1_exp_final}). 
	i.e
	\begin{equation} \label{beta_wg_R1_R2_equation}
	\sin \left( {\frac{{\beta \pi }}{2}} \right) = 2\sin \left( {\frac{{{\phi _m}}}{2}} \right)
	\end{equation}
	The function $\arcsin$ is defined for the fundamental period $[-\pi/2,\pi/2]$, i.e, if $y\,\in\,[-1,1]$ then $x = {\arcsin}\left( y \right)$ \textcolor{black}{implies} $x \in [-\pi/2, \pi/2]$. 
	In the present context, in (\ref{beta_wg_R1_R2_equation}), if the argument is in LHS, for $\beta\in(0,1]$ we have $\beta \pi / 2 \in (0,\pi/2]$ which lies in the fundamental period. Therefore taking inverse will give the correct solution.  
	Hence, if $\beta\in(0,1]$, (\ref{beta_wg_R1_R2_equation}) can be evaluated by directly taking $\arcsin$, and we get
	\begin{equation}
	{\beta _{{\omega _g}\Re 1}} = \frac{2}{\pi }{\sin ^{ - 1}}\left( {2\sin \left( {\frac{{{\phi _m}}}{2}} \right)} \right)
	\end{equation}
	\textcolor{black}{However}, in (\ref{beta_wg_R1_R2_equation}), for  $\beta\in(1,2)$, we have $\beta \pi/2 \in (\pi/2,\pi)$, which lies outside of the fundamental period. Therefore, taking $\arcsin$ \textcolor{black}{directly} the solution will lie in $[-\pi/2,\pi/2]$, which is incorrect. In such cases, the solution can be corrected by shifting the origin of inverse function in the appropriate domain of the argument. 
	Therefore, for $\beta\in(1,2)$, the origin of the $\arcsin$ need to be shifted
	as below 
	\begin{equation} \label{beta_wg_min_12_initial}
	\frac{{\beta \pi }}{2} - \pi  =  - \arcsin \left( {2\sin \left( {\frac{{{\phi _m}}}{2}} \right)} \right)
	\end{equation}
	\textcolor{black}{which gives}
	\begin{equation}
	{\beta _{{\omega _g}\Re 2}} = \frac{2}{\pi }\left( {\pi  - {{\sin }^{ - 1}}\left( {2\sin \left( {\frac{{{\phi _m}}}{2}} \right)} \right)} \right)
	\end{equation}
\end{proof}
Next we find $\beta$ so that $\omega_{g}>0$. 
The expression for $\omega_g$ is given in (\ref{wg_sol_final}). For better understanding, let us denote
$\Omega = \cos ^{ - 1}\left( {\frac{c_1}{r_1}} \right)$. Then using (\ref{c_1/r_1_final}), we have 
\begin{equation} \label{Omega}
\Omega = \cos ^{ - 1}\left( {\frac{c_1}{r_1}} \right) = \cos^{-1}\left({\frac{{\sin \left( {\frac{{\beta \pi }}{2}} \right)}}{{2\sin \left( {\frac{{{\phi _m}}}{2}} \right)}}}\right)
\end{equation}
Also denoting $\beta\pi/2 + \phi_{m}/2 = \eta$ in (\ref{alpha1_final}), we can write
\begin{equation} \label{alpha1_in_eta}
\alpha_1  = \left\{ {\begin{array}{*{20}{c}}
	{ - \eta }&{\eta  \in \textcolor{black}{\left( {0,\frac{\pi }{2}} \right)}}\\
	{\pi  - \eta }&{\eta  \in \left( {\frac{\pi }{2},\frac{{3\pi }}{2}} \right)}
	\end{array}} \right.
\end{equation}
With the new notations, the expression of $\omega_{g}$ in (\ref{wg_sol_final}) becomes,
\begin{equation} \label{wg_in_Omeag_and_alpha1}
{\omega _g} = \frac{1}{\theta }\left( { - \Omega  + {\alpha _1}} \right)
\end{equation}
In accordance with the controller design, $\phi_m\in(0,\pi)$ and $\beta\in(0,2)$.
From (\ref{c_1/r_1_final}), with arbitrary $\phi_m\in(0,\pi)$, we have $c_1/r_1 \in (0,1/\left( {2\sin \left( {{\phi _m}/2} \right)} \right) ]$ for both  $\beta \in(0,1]$ and $\beta \in(1,2)$.
Hence, it can be noted that $c_1/r_1  > 0 \, \forall \, \beta \in(0,2)$.
On the other hand,  
\textcolor{black}{to have real solution for $\omega_{g}$, we need }
$\left| c_1/r_1 \right| \le 1$. Therefore the common admissible range of $c_1/r_1$ is $(0,1]$.
Therefore, $\Omega \in(0,\pi/2]$ from (\ref{Omega}).

From (\ref{wg_in_Omeag_and_alpha1}), for $\omega_g>0$ we need $\Omega<\alpha_1$. However, it is already shown in previous paragraph that $\Omega\in(0,\pi/2]$. Whereas from (\ref{alpha1_in_eta}), $\eta \in(0,3\pi/2)$.

To find ${{\rm B} _{{\omega _g} + }} = \left\{ {\beta :{\omega _g} > 0} \right\}$, following three cases need to be considered:
\begin{enumerate}[(i)]
	\item    $\Omega\in(0,\pi/2]$ and $\eta \in \textcolor{black}{(0,\pi/2)}$ $\Rightarrow {\alpha _1} \in \left( { - \pi /2,0} \right)$
	\item 	 $\Omega\in(0,\pi/2]$ and $\eta \in (\pi/2,\pi]$ $ \Rightarrow {\alpha _1} \in \left[ {0,\pi /2} \right) $
	\item 	 $\Omega\in(0,\pi/2]$ and $\eta \in (\pi,3\pi/2)$ $ \Rightarrow {\alpha _1} \in \left( { - \pi /2,0} \right) $
\end{enumerate} 
\textcolor{black}{It is clear that cases $(i)$ and $(iii)$ will lead to negative $\omega_{g}$ as $\Omega>\alpha_1$ in the entire range of $\Omega$ and $\alpha_1$. Hence they are not considered in further analysis.}
\begin{lemma}
	[Existence of solution in $case-ii$] The solution set $\beta_{{\omega _g} + }$ exists only when $\Omega\in(0,\pi/2]$ and $\eta\in(\pi/2,\pi]$. 
\end{lemma}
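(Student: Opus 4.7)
The plan is to read off the sign of $\omega_g$ directly from the compact form (\ref{wg_in_Omeag_and_alpha1}) and then eliminate two of the three enumerated cases on purely sign-based grounds. Since $\theta>0$ and $\omega_g=(\alpha_1-\Omega)/\theta$, the sign of $\omega_g$ coincides with that of $\alpha_1-\Omega$. Combined with the fact already established in the paragraph just before the lemma, namely $\Omega\in(0,\pi/2]$ (strictly positive), a necessary condition for $\omega_g>0$ is $\alpha_1>\Omega>0$; in particular $\alpha_1$ must itself be strictly positive.

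The next step is to walk through the three cases tabulated above the lemma and retain only those compatible with $\alpha_1>0$. In case $(i)$ the formula (\ref{alpha1_in_eta}) places $\alpha_1$ in $(-\pi/2,0)$, which is incompatible with $\alpha_1>0$, so no $\beta$ with $\eta\in(0,\pi/2)$ yields $\omega_g>0$. Case $(iii)$ is excluded for exactly the same reason: $\alpha_1\in(-\pi/2,0)$. Only case $(ii)$, where $\alpha_1=\pi-\eta\in[0,\pi/2)$, can possibly accommodate the required inequality $\alpha_1>\Omega$. This establishes the ``only when'' assertion of the lemma without any further calculation.

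If the proof is also expected to confirm that case $(ii)$ is not vacuous, the main obstacle is to exhibit at least one $\beta$ with $\eta\in(\pi/2,\pi]$ satisfying $\pi-\eta>\Omega$. The plan here is a one-line continuity argument in $\beta$: as $\eta$ approaches $\pi/2$ from above, $\alpha_1$ approaches $\pi/2$, whereas $\Omega=\cos^{-1}\!\bigl(\sin(\beta\pi/2)/(2\sin(\phi_m/2))\bigr)$ is strictly less than $\pi/2$ because the argument of $\cos^{-1}$ is strictly positive for every $\beta\in(0,2)$. Hence $\alpha_1>\Omega$ holds strictly on a right-neighbourhood of $\eta=\pi/2$, giving a non-empty interval of $\beta$ on which $\omega_g>0$. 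The remainder of this sub-interval, and in particular its explicit endpoints in terms of $\phi_m$, are handled in the subsequent theorem; the present lemma needs only the sign bookkeeping above.
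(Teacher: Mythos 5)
Your necessity argument is correct: since $\theta>0$ and $\omega_g=(\alpha_1-\Omega)/\theta$ with $\Omega\in(0,\pi/2]$ strictly positive, any $\beta$ with $\omega_g>0$ must have $\alpha_1>\Omega>0$, and by (\ref{alpha1_in_eta}) a positive $\alpha_1$ occurs only in case $(ii)$, i.e.\ $\eta\in(\pi/2,\pi]$. This is, however, exactly the observation the paper already makes in the sentence \emph{preceding} the lemma (``cases $(i)$ and $(iii)$ will lead to negative $\omega_g$ as $\Omega>\alpha_1$ \dots''), so for the literal ``only when'' claim your reasoning and the paper's coincide. Where you genuinely diverge is in content rather than logic: the paper's proof of the lemma spends its entire length \emph{inside} case $(ii)$, converting $\alpha_1>\Omega$ into the explicit algebraic thresholds $\beta_{y1}$ and $\beta_{y2}$ of (\ref{wg_more_zero_case_2(a)_2}) and (\ref{wg_more_zero_case_2(b)_4}) across the four sub-cases of Table~\ref{table-1}, and ruling out case $ii(b_2)$. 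Those thresholds are indispensable inputs to Theorem~3, so your shorter proof, while sound for the statement as worded, does not supply the material the rest of the section depends on.

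Your optional non-vacuousness argument contains a genuine gap. You send $\eta\to(\pi/2)^{+}$, i.e.\ $\beta\to\beta_{x1}^{+}$ with $\beta_{x1}=(\pi-\phi_m)/\pi$, and assert $\Omega<\pi/2$ there. But $\Omega$ is real only where $\sin(\beta\pi/2)\le 2\sin(\phi_m/2)$; at $\beta=\beta_{x1}$ this reads $\cos(\phi_m/2)\le 2\sin(\phi_m/2)$, which fails precisely when $\phi_m<2\tan^{-1}(1/2)\approx 0.9273$. For such $\phi_m$ there is no right-neighbourhood of $\eta=\pi/2$ on which $\Omega$ is defined, so the continuity argument produces nothing --- this is exactly why Theorem~4(a) starts the admissible interval at $\beta_{{\omega_g}\Re 2}$ rather than at $\beta_{x1}$. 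Non-vacuousness still holds in that regime, but it must be exhibited on the interval $(\beta_{{\omega_g}\Re 2},\beta_{y2})$, not near $\beta_{x1}$.
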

\begin{proof}
	\textcolor{black}{From (\ref{alpha1_in_eta}), for \textcolor{black}{$\eta \in (\pi/2,\pi]$}, we get $\alpha_1 \in [0,\pi/2)$.
		Therefore, for $\omega_g>0$, }
	\begin{equation} \label{wg_more_zero_case_2_initial}
	{\cos ^{ - 1}}\left( {\frac{{\sin \left( {\frac{{\beta \pi }}{2}} \right)}}{{2\sin \left( {\frac{{{\phi _m}}}{2}} \right)}}} \right) < \pi  - \eta
	\end{equation}
	Above equation  
	can be solved by taking $\cos$ in both side while considering LHS and RHS as arguments.
	In (\ref{wg_more_zero_case_2_initial}), LHS and RHS both are in $(0,\pi/2)$ and in this range $\cos$ is decreasing function. So taking $\cos$ both side, the inequality is reversed and we get 
	\begin{equation} \label{wg_more_zero_case_2_2}
	\frac{{\sin \left( {\frac{{\beta \pi }}{2}} \right)}}{{2\sin \left( {\frac{{{\phi _m}}}{2}} \right)}} > \cos \left( {\pi  - \eta} \right)
	\end{equation} 
	Since, $\cos(\pi-x)=-\cos(x)\,;\,x\in(\pi/2,\pi)$, therefore, simplifying (\ref{wg_more_zero_case_2_2}) \textcolor{black}{and replacing $\eta = {\frac{{\beta \pi }}{2} + \frac{{{\phi _m}}}{2}}$,} we get
	\begin{equation} \label{wg_more_zero_case_2_3}
	\frac{{\sin \left( {\frac{{\beta \pi }}{2}} \right)}}{{2\sin \left( {\frac{{{\phi _m}}}{2}} \right)}} > -\cos \left( {\frac{{\beta \pi }}{2} + \frac{{{\phi _m}}}{2}} \right)
	\end{equation} 
	Expanding the RHS and multiplying both side of the inequality by ${2\sin \left( {\frac{{{\phi _m}}}{2}} \right)}$, we have
	\begin{equation} \label{wg_more_zero_case_2_6}
	\sin \left( {\frac{{\beta \pi }}{2}} \right) > - \sin \left( {{\phi _m}} \right)\cos \left( {\frac{{\beta \pi }}{2}} \right) + 
	2{\sin ^2}\left( {\frac{{{\phi _m}}}{2}} \right)\sin \left( {\frac{{\beta \pi }}{2}} \right)
	\end{equation}
	Further, (\ref{wg_more_zero_case_2_6}) can be simplified by dividing by $\cos(\beta\pi/2)$. However, for $\beta\in(0,1]$, $\cos(\beta\pi/2)\ge 0$. Therefore the inequality sign will not be affected. However, for $\beta \in (1,2)$, $\cos(\beta\pi/2) < 0$, and the inequality sign will be reversed. So further classification can be done based on the \textcolor{black}{range} of $\beta$.
	
	\textit{Case-ii(a): $\beta \in(0,1]$ :}\\
	Dividing both side of (\ref{wg_more_zero_case_2_6}) by $\cos(\beta\pi/2)$ and simplifying, we get
	\begin{equation} \label{wg_more_zero_case_2(a)_1a}
	\tan \left( {\frac{{\beta \pi }}{2}} \right)\left( {1 - 2{{\sin }^2}\left( {\frac{{{\phi _m}}}{2}} \right)} \right) >  - \sin \left( {{\phi _m}} \right)
	\end{equation}
	The plot for $1 - 2{\sin ^2}\left( {{\phi _m}/2} \right)$ is given in Fig.\ref{fig_pm_vs_betay1_betax1_etc}(a), which is positive when $\phi_{m}\in(0,\pi/2]$ and negative when $\phi_{m}\in(\pi/2,\pi)$. Therefore, $case-ii(a)$ can be further divided as follows.
	
	\textit{$Case-ii(a_1)$, ($\beta\in(0,1]$ and $\phi_{m}\in(0,\pi/2]$):}  In this case, $1 - 2{\sin ^2}\left( {{\phi _m}/2} \right)$ is positive. Dividing both side of (\ref{wg_more_zero_case_2(a)_1a}) by $1 - 2{\sin ^2}\left( {{\phi _m}/2} \right)$, the inequality sign will remain same. Therefore, we have 
	\begin{equation} \label{wg_more_zero_case_2(a)_1}
	\tan \left( {\frac{{\beta \pi }}{2}} \right) > \frac{{ - \sin \left( {{\phi _m}} \right)}}{{1 - 2{{\sin }^2}\left( {\frac{{{\phi _m}}}{2}} \right)}}
	\end{equation}
	In (\ref{wg_more_zero_case_2(a)_1}), 
	LHS and RHS \textcolor{black}{are both} functions \textcolor{black}{with range} ($-\infty$,$\infty$). From fundamentals of inverse trigonometry we know that when $y=\arctan(x); x\in(-\infty,\infty)$ then $y\in[-\pi/2,\pi/2]$. Since $\arctan$ is increasing function in the fundamental period, therefore, the inequality sign in (\ref{wg_more_zero_case_2(a)_1}) will remain same on taking $\arctan$ on both side. Therefore, taking $\arctan$ in (\ref{wg_more_zero_case_2(a)_1}) and evaluating for $\beta$, we get
	\begin{equation} \label{wg_more_zero_case_2(a)_2}
	\beta  > \frac{2}{\pi }\left( {{{\tan }^{ - 1}}\left( {\frac{{ - \sin \left( {{\phi _m}} \right)}}{{1 - 2{{\sin }^2}\left( {\frac{{{\phi _m}}}{2}} \right)}}} \right)} \right) = {\beta _{y1}}(say)
	\end{equation}
	Then, we get the solution set 
	\textcolor{black}{${{\rm{B}}_{y11}} = \left\{ {\beta  \in \left( {0,1} \right]:\beta  > {\beta _{y1}},{\phi _m} \in \left( {0,\pi /2} \right],{\omega _g} > 0} \right\}$}.\\
	
	\textit{Case-ii$(a_2)$, ($\beta\in(0,1)$ and $\phi_{m}\in(\pi/2,\pi)$):}
	In this case $1 - 2{\sin ^2}\left( {{\phi _m}/2} \right)$ is negative. Therefore, dividing both side by $1 - 2{\sin ^2}\left( {{\phi _m}/2} \right)$ in  (\ref{wg_more_zero_case_2(a)_1a}), the inequality sign will get reverse and we have,
	\begin{equation} \label{wg_more_zero_case_2(a)_3}
	\tan \left( {\frac{{\beta \pi }}{2}} \right) < \frac{{ - \sin \left( {{\phi _m}} \right)}}{{1 - 2{{\sin }^2}\left( {\frac{{{\phi _m}}}{2}} \right)}}
	\end{equation}
	In the similar fashion as in $case-ii(a_1)$, LHS and RHS in (\ref{wg_more_zero_case_2(a)_3}) are functions with \textcolor{black}{range} in $(-\infty,\infty)$. Taking $\arctan$ on both side does not affect the inequality sign in this case. Evaluating for $\beta$, we have
	\begin{equation} \label{wg_more_zero_case_2(a)_4}
	\beta  < \frac{2}{\pi }\left( {{{\tan }^{ - 1}}\left( {\frac{{ - \sin \left( {{\phi _m}} \right)}}{{1 - 2{{\sin }^2}\left( {\frac{{{\phi _m}}}{2}} \right)}}} \right)} \right) = {\beta _{y1}}
	\end{equation}
	Therefore, we get the solution set 
	$\textcolor{black}{{{\rm{B}}_{y12}} = \left\{ {\beta  \in \left( {0,1} \right]:\beta  < {\beta _{y1}},{\phi _m} \in \left( {\pi /2,\pi } \right),{\omega _g} > 0} \right\}}$.\\
	
	\textit{Case-ii(b): $\beta \in(1,2)$ :}\\
	In (\ref{wg_more_zero_case_2_6}), dividing both side by $\cos(\beta\pi/2)$, we get
	\begin{equation}  \label{wg_more_zero_case_2(b)_1}
	\tan \left( {\frac{{\beta \pi }}{2}} \right) <  - \sin \left( {{\phi _m}} \right) + 2\tan \left( {\frac{{\beta \pi }}{2}} \right){\sin ^2}\left( {\frac{{{\phi _m}}}{2}} \right)
	\end{equation}
	Simplifying (\ref{wg_more_zero_case_2(b)_1}), we get
	\begin{equation} \label{wg_more_zero_case_2(b)_1a}
	\tan \left( {\frac{{\beta \pi }}{2}} \right)\left( {1 - 2{{\sin }^2}\left( {\frac{{{\phi _m}}}{2}} \right)} \right) <  - \sin \left( {{\phi _m}} \right)
	\end{equation}
	Again, from Fig.\ref{fig_pm_vs_betay1_betax1_etc}(a), $1 - 2{\sin ^2}\left( {{\phi _m}/2} \right)$ is negative for $\phi_{m}\in(0,\pi/2]$ and positive for $\phi_{m}\in(\pi/2,\pi)$. Therefore, $case-ii(b)$ can be further divided into two sub cases according to the range of $\phi_{m}$.\\
	
	\textit{case-ii$(b_1)$, ($\beta$$\in$(1,2) and $\phi_{m}\in(0,\pi/2]$):}
	In this case, $1 - 2{\sin ^2}\left( {{\phi _m}/2} \right)$ is positive, therefore, dividing both side in (\ref{wg_more_zero_case_2(b)_1a}) by $1 - 2{\sin ^2}\left( {{\phi _m}/2} \right)$, the inequality sign will not be affected. Therefore, we get
	\begin{equation} \label{wg_more_zero_case_2(b)_2}
	\tan \left( {\frac{{\beta \pi }}{2}} \right) < \frac{{ - \sin \left( {{\phi _m}} \right)}}{{1 - 2{{\sin }^2}\left( {\frac{{{\phi _m}}}{2}} \right)}}
	\end{equation}
	To evaluate $\beta$ from (\ref{wg_more_zero_case_2(b)_2}), we need to take $\arctan$ on both side. From fundamental of trigonometry, ${\tan ^{ - 1}}\left( {\tan \left( x \right)} \right) = x | x\in(-\pi/2,\pi/2)$. Notice that the argument  $\beta\pi/2$ lies in $(\pi/2,\pi)$ for $\beta \in (1,2)$, which is outside the range of fundamental period.
	In this case the origin needs to be shifted as $\tan \left( {\frac{{\beta \pi }}{2}} \right) = \tan \left( {\frac{{\beta \pi }}{2} - \pi } \right)$. Therefore, \textcolor{black}{to get correct solution} from (\ref{wg_more_zero_case_2(b)_2}), we need to solve
	\begin{equation} \label{wg_more_zero_case_2(b)_3}
	\tan \left( {\frac{{\beta \pi }}{2} - \pi } \right) < \frac{{ - \sin \left( {{\phi _m}} \right)}}{{1 - 2{{\sin }^2}\left( {\frac{{{\phi _m}}}{2}} \right)}}
	\end{equation}

	In (\ref{wg_more_zero_case_2(b)_3}), for $\beta \in (1,2)$, $\tan(\beta\pi/2-\pi) \in (-\infty,0)$ and $\arctan(-\infty,0) \in (0,\pi/2)$. However, in $(0,\pi/2)$,  $\arctan$ is increasing function. Therefore, the inequality sign will remain same on taking $\arctan$ on both side.
	
	Therefore, taking $\arctan$ on both side of (\ref{wg_more_zero_case_2(b)_3}), we get
	\[\frac{{\beta \pi }}{2} - \pi  < {\tan ^{ - 1}}\left( {\frac{{ - \sin \left( {{\phi _m}} \right)}}{{1 - 2{{\sin }^2}\left( {\frac{{{\phi _m}}}{2}} \right)}}} \right)\]
	Evaluating for $\beta$, we get
	\begin{equation} \label{wg_more_zero_case_2(b)_4}
	\beta  < \frac{2}{\pi }\left( {\pi  + {{\tan }^{ - 1}}\left( {\frac{{ - \sin \left( {{\phi _m}} \right)}}{{1 - 2{{\sin }^2}\left( {\frac{{{\phi _m}}}{2}} \right)}}} \right)} \right) = \beta_{y2}(say)
	\end{equation}
	Then we have the solution set as
	$\textcolor{black}{{{\rm{B}}_{y21}} = \left\{ {\beta  \in \left( {1,2} \right):\beta  < {\beta _{y2}},{\phi _m} \in \left( {0,\pi /2} \right],{\omega _g} > 0} \right\}}$.\\
	
	\textit{case-ii$(b_2)$, ($\beta\in(1,2)$ and $\phi_{m}\in(\pi/2,\pi)$):}
	In this case,  $1 - 2{\sin ^2}\left( {{\phi _m}/2} \right)$ is negative. Therefore, dividing both side in (\ref{wg_more_zero_case_2(b)_1a}) by $1 - 2{\sin ^2}\left( {{\phi _m}/2} \right)$ the sign of inequality will get reversed. Hence, we have
	\begin{equation} \label{wg_more_zero_case_2(b)_5}
	\tan \left( {\frac{{\beta \pi }}{2}} \right) > \frac{{ - \sin \left( {{\phi _m}} \right)}}{{1 - 2{{\sin }^2}\left( {\frac{{{\phi _m}}}{2}} \right)}}
	\end{equation}
	Again, for $\beta\in(1,2)$, $\beta\pi/2 \in (\pi/2,\pi)$ which is outside the range of fundamental period of $\tan(x)$. Therefore, \textcolor{black}{we shift the origin and get}
	\begin{equation} \label{wg_more_zero_case_2(b)_6}
	\tan \left( {\frac{{\beta \pi }}{2} - \pi } \right) > \frac{{ - \sin \left( {{\phi _m}} \right)}}{{1 - 2{{\sin }^2}\left( {\frac{{{\phi _m}}}{2}} \right)}}
	\end{equation}
	In this range $\arctan$ is increasing function. Thus, we get,
	\begin{equation} \label{wg_more_zero_case_2(b)_7}
	\beta  > \frac{2}{\pi }\left( {\pi  + {{\tan }^{ - 1}}\left( {\frac{{ - \sin \left( {{\phi _m}} \right)}}{{1 - 2{{\sin }^2}\left( {\frac{{{\phi _m}}}{2}} \right)}}} \right)} \right) = \beta_{y2}
	\end{equation}
	Therefore, the solution set is
	$\textcolor{black}{{{\rm{B}}_{y22}} = \left\{ {\beta  \in \left( {1,2} \right):\beta  > {\beta _{y2}},{\phi _m} \in \left( {\pi /2,\pi } \right),{\omega _g} > 0} \right\}}$.\\
	\begin{figure}  
		\centering
		\begin{tabular}{ c l }
			\subfigure[]{
				\includegraphics[width=1.7in] {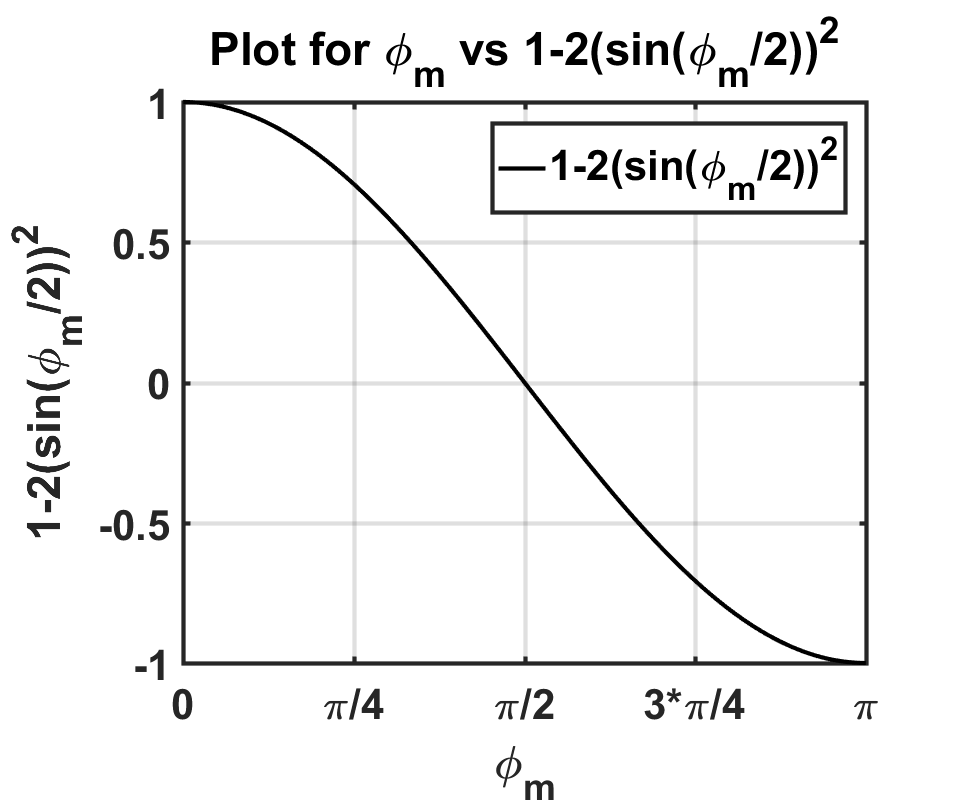}}
			
			\subfigure[]{
				\includegraphics[width=1.7in] {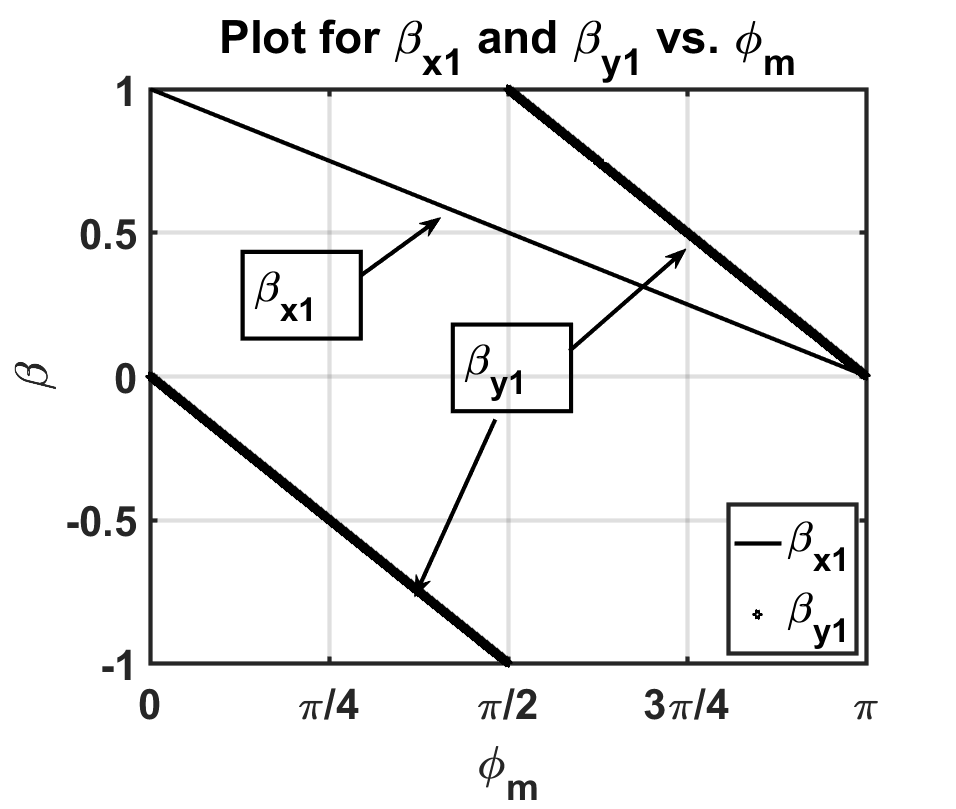}} 
			
			\subfigure[]{
				\includegraphics[width=1.8in] {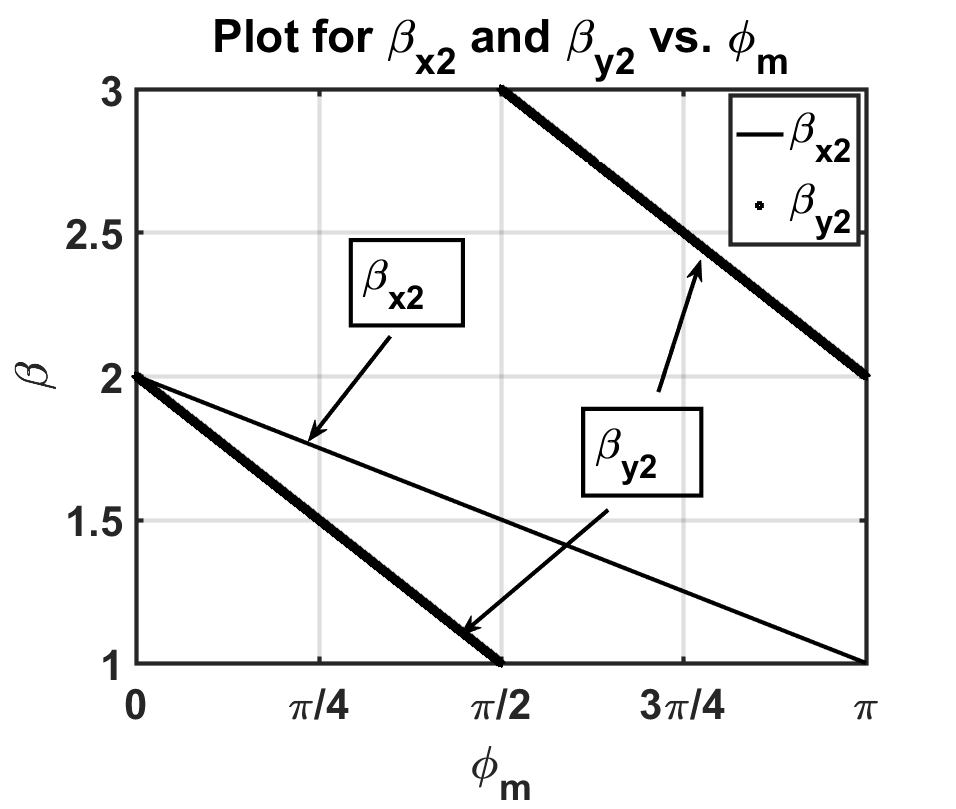}}
			
		\end{tabular}
		\caption{Graphical representation of (a) $\phi_m$ vs. $1 - 2{\sin ^2}\left( {{\phi _m}/2} \right)$, (b) $\phi_m$ vs. $\beta_{x1}$ and $\phi_{m}$ vs. $\beta_{y1}$, (c) $\phi_m$ vs. $\beta_{x2}$ and $\phi_{m}$ vs. $\beta_{y2}$.}
		\label{fig_pm_vs_betay1_betax1_etc}
	\end{figure}
	The above four cases are summarized in  Table-\ref{table-1}. 
	\textcolor{black}{From Fig.\ref{fig_pm_vs_betay1_betax1_etc}(c), we can see that no $\beta\in(1,2)$ for $\phi_{m}\in(\pi/2,\pi)$ exists such that $\beta>\beta_{y2}$. Hence, there is no solution for $case-ii(b_2)$. From the same figure, it is evident that the other three cases can have solution for $\beta$.}
	\begin{table}[]
		\centering
		\caption{Existence of solution such that $\omega_{g}>0$}
		\label{table-1}
		\begin{tabular}{|l|l|l|}
			\hline 
			& ${{\phi _m} \in \left( {0,\pi /2} \right]}$      & ${{\phi _m} \in \left( {\pi /2,\pi } \right)}$      \\ \hline
			${\beta  \in \left( {0,1} \right]}$    &  $case-ii(a_1)$, ${\rm B}_{y11}={\beta  > {\beta _{y1}}}$    &$case-ii(a_2)$, $ {\rm B}_{y12}={\beta  < {\beta _{y1}}}$      \\ \hline
			${\beta  \in \left( {1,2} \right)}$      & $case-ii(b_1)$, ${\rm B}_{y21}={\beta  < {\beta _{y2}}}$    &$case-ii(b_2)$, ${\rm B}_{y22}={\beta  > {\beta _{y2}}} $      \\ \hline
		\end{tabular}
	\end{table}
\end{proof}
\textcolor{black}{Lemma-3 gives the possible solution for ${\rm B}_{{\omega _g} + }$ when $\eta\in(\pi/2,\pi]$. Since, $\eta = (\beta\pi/2 + \phi_{m}/2)$, this implies $\beta\in ( (\pi-\phi_{m})/\pi, (2\pi-\phi_{m})/\pi ]$. Let a set ${{\rm B}_x} = \left\{ {\beta :\beta  \in \left( {{\beta _{{x1}}},{\beta _{x2}}} \right]} \right\}$, where ${\beta _{{x1}}} = ((\pi-\phi_{m})/\pi )$ and ${\beta _{{x2}}} = (2\pi-\phi_{m})/\pi$. Therefore, ${\rm B}_{{\omega _g} + }$ can be found from intersection of solution set in Lemma-3 and the set ${{\rm B}_x}$. Therefore, the task that remains is to find ${\rm B}_{{\omega _g} + }$ for each of the valid three cases. }\\
\textcolor{black}{Note: For $\phi_{m}\in(0,\pi)$, $\beta_{x1} \in (0,1)$ and $\beta_{x2} \in (1,2)$.}
\begin{theorem} 	[Finding $ {\rm B}_{{\omega _g} + }$]
	\textcolor{black}{The set ${\rm B}_{{\omega _g} + }$ containing solution $\beta$ when $\omega_{g}>0$ is given as $\textcolor{black}{{{\rm B}_{{\omega _g} + }} = \left\{ {\beta :\beta  \in ({\beta _{x1}},{\beta _{y2}})} \right\}}$ when $\phi_{m}\in(0,\pi/2]$ and $\textcolor{black}{{{\rm B}_{{\omega _g} + }} = \left\{ {\beta :\beta  \in ({\beta _{x1}},{\beta _{y1}})} \right\}}$ \\
		when $\phi_{m}\in(\pi/2,\pi)$},
	where\\
	$\beta_{x_1} = (\pi-\phi_{m})/\pi$,
	${\beta _{y1}} = \frac{2}{\pi }\left( {{{\tan }^{ - 1}}\left( {\frac{{ - \sin \left( {{\phi _m}} \right)}}{{1 - 2{{\sin }^2}\left( {\frac{{{\phi _m}}}{2}} \right)}}} \right)} \right)$ and 
	$\beta_{y2} = \frac{2}{\pi }\left( {\pi  + {{\tan }^{ - 1}}\left( {\frac{{ - \sin \left( {{\phi _m}} \right)}}{{1 - 2{{\sin }^2}\left( {\frac{{{\phi _m}}}{2}} \right)}}} \right)} \right)$ 
	are given in (\ref{wg_sol_final_expanded}), (\ref{wg_more_zero_case_2(a)_2}) and (\ref{wg_more_zero_case_2(b)_4}) respectively.
\end{theorem}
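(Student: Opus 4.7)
The plan is to assemble ${\rm B}_{\omega_g+}$ by intersecting the four sub-case solutions $\{{\rm B}_{y11}, {\rm B}_{y12}, {\rm B}_{y21}, {\rm B}_{y22}\}$ obtained in Lemma 3 with the range-of-validity constraint $\eta \in (\pi/2, \pi]$ identified just before the theorem. Translating $\eta = \beta\pi/2 + \phi_m/2 \in (\pi/2, \pi]$ gives $\beta \in (\beta_{x1}, \beta_{x2}]$ with $\beta_{x1} = (\pi-\phi_m)/\pi$ and $\beta_{x2} = (2\pi-\phi_m)/\pi$. Call this set ${\rm B}_x$. Then by construction, ${\rm B}_{\omega_g+} = {\rm B}_x \cap ({\rm B}_{y11} \cup {\rm B}_{y12} \cup {\rm B}_{y21} \cup {\rm B}_{y22})$, and the remark preceding the theorem already records that $\beta_{x1} \in (0,1)$ and $\beta_{x2} \in (1,2)$ for every admissible $\phi_m$.

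I would split the analysis on $\phi_m$. For $\phi_m \in (0, \pi/2]$, only cases $ii(a_1)$ and $ii(b_1)$ are relevant. I would show using Fig.\ref{fig_pm_vs_betay1_betax1_etc}(b) that $\beta_{y1} < \beta_{x1}$ throughout this range, so ${\rm B}_x \cap {\rm B}_{y11} = (\beta_{x1}, 1]$. Similarly, from Fig.\ref{fig_pm_vs_betay1_betax1_etc}(c), $\beta_{y2} \le \beta_{x2}$ on this interval, giving ${\rm B}_x \cap {\rm B}_{y21} = (1, \beta_{y2})$. Gluing the two pieces at $\beta = 1$ yields ${\rm B}_{\omega_g+} = (\beta_{x1}, \beta_{y2})$, matching the first branch of the claim.

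For $\phi_m \in (\pi/2, \pi)$, case $ii(b_2)$ has already been eliminated and case $ii(a_1)$ does not apply, so only $ii(a_2)$ and $ii(b_1)$ can contribute. Reading the same figures in the complementary $\phi_m$ range, one sees $\beta_{y1}$ moves above $\beta_{x1}$ (and stays below $1$), so ${\rm B}_x \cap {\rm B}_{y12} = (\beta_{x1}, \beta_{y1})$. On the other hand, for these $\phi_m$ one verifies $\beta_{y2} \le 1$, which makes ${\rm B}_x \cap {\rm B}_{y21} = \emptyset$ because ${\rm B}_{y21}$ requires $\beta \in (1,2)$ with $\beta < \beta_{y2}$. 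Hence ${\rm B}_{\omega_g+} = (\beta_{x1}, \beta_{y1})$, which is the second branch.

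The routine ingredient is the trigonometric bookkeeping already done in Lemma 3; the real obstacle is a clean algebraic proof of the orderings among $\beta_{x1}, \beta_{x2}, \beta_{y1}, \beta_{y2}$ that the figures suggest, because these determine which endpoint survives the intersection. I would handle this by rewriting $\beta_{y1}$ and $\beta_{y2}$ using $-\sin(\phi_m)/(1-2\sin^2(\phi_m/2)) = -\tan(\phi_m)$ (via the double-angle identity $1-2\sin^2(\phi_m/2) = \cos(\phi_m)$, with the appropriate branch shift when $\phi_m > \pi/2$), so that $\beta_{y1}$ and $\beta_{y2}$ reduce to simple linear expressions in $\phi_m$ that can be compared termwise with $\beta_{x1} = 1 - \phi_m/\pi$ and $\beta_{x2} = 2 - \phi_m/\pi$. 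Once these monotone comparisons are in place, the case-by-case intersections collapse to the stated intervals.
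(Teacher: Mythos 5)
Your overall strategy is the same as the paper's: intersect the validity window ${\rm B}_x=(\beta_{x1},\beta_{x2}]$ coming from $\eta\in(\pi/2,\pi]$ with the four case-sets of Lemma 3, split on $\phi_m$, and settle everything by the orderings of $\beta_{x1},\beta_{x2},\beta_{y1},\beta_{y2}$. The first branch ($\phi_m\in(0,\pi/2]$) is handled exactly as in the paper and is fine. Your closing suggestion to replace the figure-reading by the identity $-\sin(\phi_m)/(1-2\sin^2(\phi_m/2))=-\tan(\phi_m)$, so that $\beta_{y1}$ and $\beta_{y2}$ become piecewise-linear in $\phi_m$ and can be compared termwise with $\beta_{x1}=1-\phi_m/\pi$ and $\beta_{x2}=2-\phi_m/\pi$, would actually make rigorous the orderings that the paper only justifies by plots; that part is an improvement.

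The second branch, however, contains a genuine error. For $\phi_m\in(\pi/2,\pi)$ you assert that case-ii($b_1$) ``can contribute'' and then dispose of it by claiming $\beta_{y2}\le 1$ there. Both halves are wrong. Case-ii($b_1$) is defined only for $\phi_m\in(0,\pi/2]$: for $\phi_m\in(\pi/2,\pi)$ the factor $1-2\sin^2(\phi_m/2)=\cos(\phi_m)$ is negative, the inequality reverses, and the governing case on $\beta\in(1,2)$ is ii($b_2$), whose condition is $\beta>\beta_{y2}$, not $\beta<\beta_{y2}$. Moreover, by your own proposed simplification, $\beta_{y2}=\frac{2}{\pi}\left(\pi+\tan^{-1}(-\tan(\phi_m))\right)=\frac{2}{\pi}(2\pi-\phi_m)=4-2\phi_m/\pi\in(2,3)$ on $(\pi/2,\pi)$, which is nowhere $\le 1$. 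Had case-ii($b_1$) genuinely applied with this value of $\beta_{y2}$, the intersection with ${\rm B}_x$ would have been the nonempty set $(1,\beta_{x2})$ and the theorem would come out wrong; your stated conclusion survives only because the false premise and the false bound happen to cancel. The correct elimination, and the paper's, is that case-ii($b_2$) demands $\beta>\beta_{y2}>2$, which is incompatible with $\beta\in(1,2)$, so the range $\beta\in(1,2)$ contributes nothing and only case-ii($a_2$) survives, yielding $(\beta_{x1},\beta_{y1})$. You need to redo the step that removes the $\beta\in(1,2)$ contribution along these lines.
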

\begin{proof}
	\textbf{\textit{$Case-ii(a_1)$, ($\beta\in(0,1]$, $\phi_{m}\in(0,\pi/2]$):}} 
	${\rm B}_{{\omega _g} + } = {\rm B}_x \cap {\rm B}_{y11}$ where, ${{\text{B}}_x} = \left\{ {\beta :\beta  \in \left( {{\beta _{x1}},{\beta _{x2}}} \right]} \right\}$ and  ${{\text{B}}_{y11}} = \left\{ {\beta :\beta  > {\beta _{y1}}} \right\}$. From Fig.\ref{fig_pm_vs_betay1_betax1_etc}(a), $\beta_{y1} \in (-1,0)$, whereas $\beta_{x1}\in(0,1)$ and $\beta_{x2} \in(1,2)$. Therefore, in this scenario we have
	$\textcolor{black}{ {{\rm B}_x} = \left\{ {\beta :\beta  \in ({\beta _{x1}},1]} \right\} }$ and \textcolor{black}{${{\rm B}_{y11}} = \left\{ {\beta :\beta  \in (0,1]} \right\}$}. Hence, the intersection of ${\rm B}_x$ and ${\rm B}_{y11}$ will be the solution set $\textcolor{black}{{{\rm B}_{{\omega _g} + }} = \left\{ {\beta :\beta  \in ({\beta _{x1}},1]:{\phi _m} \in (0,\pi /2]} \right\}}$.
	
	\textbf{\textit{$Case-ii(b_1)$, ($\beta\in(1,2), \phi_{m}\in(0,\pi/2]$):}}
	${\rm B}_{{\omega _g} + } = {\rm B}_x \cap {\rm B}_{y21}$, where, ${{\rm B}_{y21}} = \left\{ {\beta :\beta  < {\beta _{y2}}} \right\}$. From Fig.\ref{fig_pm_vs_betay1_betax1_etc}(b), for $\phi_{m}\in(0,\pi/2)$, $\beta_{y2} \in (1,2)$, whereas $\beta_{x2}\in(1,2)$. 
	Therefore, in this scenario we have \textcolor{black}{${{\rm B}_x} = \left\{ {\beta :\beta  \in (1,{\beta _{x2}})} \right\}$} and 
	\textcolor{black}{${{\rm B}_{y21}} = \left\{ {\beta :\beta  \in (1,{\beta _{y2}})} \right\}$}.
	Therefore,
	${{\rm B}_{{\omega _g} + }} = {\rm B}_x \cap {\rm B}_{y21}=  \left\{ {\beta :\beta  \in \left( {1,\min \left( {{\beta _{x2}},{\beta _{y2}}} \right)} \right)} \right\}$. 
	From, Fig.\ref{fig_pm_vs_betay1_betax1_etc}(b), for $\phi_{m}\in(0,\pi/2)$, $\beta_{y2}<\beta_{x2}$. Therefore, ${{\rm B}_{{\omega _g} + }} = \left\{ {\beta :\beta  \in \left( {1,{\beta _{y2}}} \right)} \right\}$.
	
	Combining $Case-ii(a_1)$ and $Case-ii(b_1)$,
	for $\phi_{m}\in(0,\pi/2]$,
	${{\rm B}_{{\omega _g} + }} = \left\{ {\beta :\beta  \in \left( {{\beta _{x1}},1} \right] \cup \left( {1,{\beta _{y2}}} \right)} \right\} = \{ \beta :  \beta \in \left( {{\beta _{x1}},{\beta _{y2}}} \right) \}$.
	
	\textbf{\textit{$Case-ii(a_2)$, ($\beta\in(0,1), \phi_{m}\in(\pi/2,\pi)$):}}
	${\rm B}_{{\omega _g} + } = {\rm B}_x \cap {\rm B}_{y12}$, where, ${{\rm B}_{y12}} = \left\{ {\beta :\beta  < {\beta _{y1}}} \right\}$. From Fig.\ref{fig_pm_vs_betay1_betax1_etc}(a), for $\phi_{m}\in(0,\pi/2)$, $\beta_{y1} \in (0,1)$, whereas $\beta_{x1}\in(0,1)$ and $\beta_{x2}\in(1,2)$.
	Therefore, in this scenario we have \textcolor{black}{${{\rm B}_x} = \left\{ {\beta :\beta  \in ({\beta _{x1}}, 1]} \right\}$} and \textcolor{black}{${{\text{B}}_{y12}} = \left\{ {\beta :\beta  \in \left( {0,{\beta _{y1}}} \right)} \right\}$}.
	For existence of solution, we must have $\beta_{x1}<\beta_{y1}$. From Fig.\ref{fig_pm_vs_betay1_betax1_etc}(a), for $\phi_{m}\in(\pi/2,\pi)$, $\beta_{x1}<\beta_{y1}$. Therefore, ${{\rm B}_{{\omega _g} + }} = \left\{ {\beta :\beta  \in \left( {{\beta _{x1}},{\beta _{y1}}} \right)} \right\}$ for $\phi_{m} \in (\pi/2 , \pi)$.  
\end{proof}
Now $\beta_{{\omega _g} }$ can be found by combining Theorem-2 and 3.
\begin{theorem}[Finding ${\rm B}_{{\omega _g} }$]
	The set ${\rm B}_{{\omega _g} }$ such that $\omega_{g}\in \Re^{+}$ is given as:\\
	(a) ${{\rm B}_{{\omega _g}}} = \left\{ {\beta :\beta  \in \left( {{\beta _{{\omega _g}\Re 2}},{\beta _{y2}}} \right)} \right\}$ when $\phi_{m}\in(0,0.9273]$,\\
	(b) ${{\rm B}_{{\omega _g}}} = \left\{ {\beta :\beta  \in \left( {{\beta _{x1}},{\beta _{{\omega _g}\Re 1}}} \right) \cup \left( {{\beta _{{\omega _g}\Re 2}},{\beta _{y2}}} \right)} \right\}$ when $\phi_{m}\in (0.9273,\pi/3)$,\\
	(c) ${{\rm B}_{{\omega _g}}} = \left\{ {\beta :\beta  \in \left( {{\beta _{x1}},{\beta _{y2}}} \right)} \right\}$ when $\phi_{m} \in [\pi/3, \pi/2]$, and\\
	(d) ${{\rm B}_{{\omega _g}}} = \left\{ {\beta :\beta  \in \left( {{\beta _{x1}},{\beta _{y1}}} \right)} \right\}$ when $\phi_{m}\in(\pi/2, \pi)$.
\end{theorem}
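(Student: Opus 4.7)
The plan is to compute ${\rm B}_{\omega_g} = {\rm B}_{\omega_g\Re} \cap {\rm B}_{\omega_g+}$ by directly substituting the characterisations from Theorem 2 and Theorem 3 and tracking how the endpoints $\beta_{x1}$, $\beta_{\omega_g\Re 1}$, $\beta_{\omega_g\Re 2}$, $\beta_{y1}$, $\beta_{y2}$ order themselves on $(0,2)$ as $\phi_m$ varies. Since Theorem 2 splits at $\phi_m = \pi/3$ and Theorem 3 splits at $\phi_m = \pi/2$, I would first partition $\phi_m \in (0,\pi)$ into the strips $(0,\pi/3)$, $[\pi/3,\pi/2]$ and $(\pi/2,\pi)$, and then refine the first strip once more at a value to be determined.

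The easy portions are (c) and (d). For $\phi_m \in [\pi/3,\pi/2]$, Theorem 2 gives ${\rm B}_{\omega_g\Re} = (0,2)$ and Theorem 3 gives ${\rm B}_{\omega_g+} = (\beta_{x1},\beta_{y2})$, so the intersection is trivially $(\beta_{x1},\beta_{y2})$; for $\phi_m \in (\pi/2,\pi)$ the same argument with $\beta_{y1}$ in place of $\beta_{y2}$ yields $(\beta_{x1},\beta_{y1})$. The substantive work is the strip $\phi_m \in (0,\pi/3)$, where ${\rm B}_{\omega_g\Re} = (0,\beta_{\omega_g\Re 1}) \cup (\beta_{\omega_g\Re 2},2)$ is no longer the full interval while ${\rm B}_{\omega_g+} = (\beta_{x1},\beta_{y2})$. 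Here the right piece $(\beta_{\omega_g\Re 2},2)$ always contributes $(\beta_{\omega_g\Re 2},\beta_{y2})$, because $\beta_{\omega_g\Re 2} > 1 > \beta_{x1}$ and $\beta_{y2} < 2$ throughout this strip; what remains is to decide whether the left piece $(0,\beta_{\omega_g\Re 1})$ also meets $(\beta_{x1},\beta_{y2})$.

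This reduces to comparing $\beta_{\omega_g\Re 1}$ and $\beta_{x1}$. Setting them equal, the equation $(2/\pi)\sin^{-1}(2\sin(\phi_m/2)) = (\pi-\phi_m)/\pi$ simplifies via $\sin((\pi-\phi_m)/2) = 2\sin(\phi_m/2)$ to $\cos(\phi_m/2) = 2\sin(\phi_m/2)$, i.e.\ $\tan(\phi_m/2) = 1/2$, pinning down the critical angle $\phi_m = 2\arctan(1/2) \approx 0.9273$. Boundary checks at $\phi_m \to 0^+$ (where $\beta_{x1} \to 1$ while $\beta_{\omega_g\Re 1} \to 0$) and $\phi_m \to (\pi/3)^-$ (where $\beta_{x1} \to 2/3$ while $\beta_{\omega_g\Re 1} \to 1$) fix the direction of the inequality on either side. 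Hence for $\phi_m \in (0,0.9273]$ we have $\beta_{\omega_g\Re 1} \le \beta_{x1}$, the left piece is absorbed by $(0,\beta_{x1}]$ and drops out, giving (a); for $\phi_m \in (0.9273,\pi/3)$ we have $\beta_{x1} < \beta_{\omega_g\Re 1}$, the left piece contributes $(\beta_{x1},\beta_{\omega_g\Re 1})$, giving (b).

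The main obstacle is isolating and justifying the critical angle $2\arctan(1/2)$: it does not fall out of any single earlier lemma but arises only from forcing the two independently-derived boundaries $\beta_{x1}$ (from Theorem 3) and $\beta_{\omega_g\Re 1}$ (from Theorem 2) to coincide, and one must further verify that $\beta_{x1}-\beta_{\omega_g\Re 1}$ is monotone on $(0,\pi/3)$ so that $2\arctan(1/2)$ is the \emph{unique} crossover and the sign pattern I read off at the two endpoints extends to the whole subinterval. Once that single monotonicity fact is in hand, everything else is elementary interval intersection.
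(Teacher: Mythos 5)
Your proposal is correct and follows the same overall decomposition as the paper: parts (c) and (d) are the trivial intersections ${\rm B}_{\omega_g\Re}\cap{\rm B}_{\omega_g+}$ when ${\rm B}_{\omega_g\Re}=(0,2)$, and the only substantive work is the strip $\phi_m\in(0,\pi/3)$, where the question reduces to how the left piece $(0,\beta_{\omega_g\Re 1})$ of ${\rm B}_{\omega_g\Re}$ sits relative to $(\beta_{x1},\beta_{y2})$. Where you genuinely improve on the paper is the treatment of the breakpoint $0.9273$: the paper obtains both the comparison $\beta_{x1}\lessgtr\beta_{\omega_g\Re 1}$ and the inequality $\beta_{\omega_g\Re 2}<\beta_{y2}$ purely by reading off plots (Fig.~5(a) and 5(b)), whereas you derive the critical angle in closed form by solving $\sin^{-1}\left(2\sin(\phi_m/2)\right)=(\pi-\phi_m)/2$, which collapses to $\tan(\phi_m/2)=1/2$ and hence $\phi_m=2\arctan(1/2)\approx 0.9273$ --- exactly the constant the paper states numerically without explanation. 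The monotonicity fact you flag as the remaining obstacle is in fact immediate: $\beta_{x1}=(\pi-\phi_m)/\pi$ is strictly decreasing while $\beta_{\omega_g\Re 1}=\frac{2}{\pi}\sin^{-1}\left(2\sin(\phi_m/2)\right)$ is strictly increasing on $(0,\pi/3)$, so their difference is strictly monotone and the crossover is unique; your endpoint checks then settle the sign on each side. The one step you assert without justification is that the right piece always contributes, i.e.\ $\beta_{\omega_g\Re 2}<\beta_{y2}$ throughout $(0,\pi/3)$; your stated reasons ($\beta_{\omega_g\Re 2}>1>\beta_{x1}$ and $\beta_{y2}<2$) pin down which endpoints survive the intersection but do not show the resulting interval is nonempty --- the paper covers this with Fig.~5(b), and an analytic argument in the same spirit as your $\tan(\phi_m/2)=1/2$ computation would be needed to make your version self-contained.
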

\begin{proof}
	From Theorem-2,
	$ {{\rm{B}}_{{\omega _g}\Re }} = \{ \beta :\beta  \in (0,{\beta _{{\omega _g}\Re 1}}) \cup ({\beta _{{\omega _g}\Re 2}},2)\} $
	for 
	$\phi_{m}\in(0,\pi/3)$, whereas,
	${{\rm{B}}_{{\omega _g}\Re }} = \{ \beta :\beta  \in (0,2)\} $
	for $\phi_{m}\in[\pi/3,\pi)$.
	However, from Theorem-3,
	${{\rm{B}}_{{\omega _g} + }} = \{ \beta :\beta  \in ({\beta _{x1}},{\beta _{y2}})\} $
	if $\phi_{m}\in(0,\pi/2]$,
	whereas
	${{\rm{B}}_{{\omega _g} + }} = \{ \beta :\beta  \in ({\beta _{x1}},{\beta _{y1}})\} $
	if $\phi_{m}\in (\pi/2,\pi)$. 
	Therefore, combining Theorem 2 and 3, following three cases are possible.
	
	\textit{Case-I, $\phi_{m}\in(0,\pi/3)$:} In this case,
	${{\rm{B}}_{{\omega _g}\Re }} = \{ \beta :\beta  \in (0,{\beta _{{\omega _g}\Re 1}}) \cup ({\beta _{{\omega _g}\Re 2}},2)\} $
	and
	$ {{\rm{B}}_{{\omega _g} + }} = \{ \beta :\beta  \in ({\beta _{x1}},{\beta _{y1}})\} $.
	For existence of solution in $\beta\in(0,1)$, we must have $\beta_{x1} < \beta_{{\omega _g}\Re 1}$ and for existence of solution in $\beta(1,2)$, $\beta_{{\omega _g}\Re 2} < \beta_{y2}$ is needed.
	
	In Fig.\ref{pm_vs_betay1_betawgR1_etc}(a), $\beta_{x1} - \beta_{{\omega _g}\Re 1}$ plot and in Fig.\ref{pm_vs_betay1_betawgR1_etc}(b), $\beta_{y2} - \beta_{{\omega _g}\Re 2}$ plot with respect to $\phi_{m}\in(0,\pi/3)$ is given. 	
	It can be seen that  $\beta_{x1} < \beta_{{\omega _g}\Re 1}$ is satisfied for $\phi_{m}\in(0.9273,\pi/3)$, and if $\phi_{m}\in(0,0.9273]$, there is no $\beta \in (0,1)$ such that $\omega_{g}>0$.  
	However, from Fig.\ref{pm_vs_betay1_betawgR1_etc}(b), $\beta_{{\omega _g}\Re 2} < \beta_{y2} \forall \phi_{m}\in(0,\pi/3)$. Therefore,
	${{\rm{B}}_{{\omega _g}}} = \{ \beta :\beta  \in ({\beta _{{\omega _g}\Re 2}},{\beta _{y2}})\} $ in this case.
	
	This concludes that if $\phi_{m}\in(0,0.9273]$, then ${{\rm B}_{{\omega _g}}} = \left\{ {\beta :\beta  \in \left( {{\beta _{{\omega _g}\Re 2}},{\beta _{y2}}} \right)} \right\}$ and if $\phi_{m}\in (0.9273,\pi/3)$, then ${{\rm B}_{{\omega _g}}} = \left\{ {\beta :\beta  \in \left( {{\beta _{x1}},{\beta _{{\omega _g}\Re 1}}} \right) \cup \left( {{\beta _{{\omega _g}\Re 2}},{\beta _{y2}}} \right)} \right\}$. 
	
	\textcolor{black}{Part} (a) and (b) of the theorem \textcolor{black}{is hence proved}.
	
	\begin{figure}  [h]
		\centering
		\begin{tabular}{ c c }
			\subfigure[]{
				\includegraphics[width=1.6in] {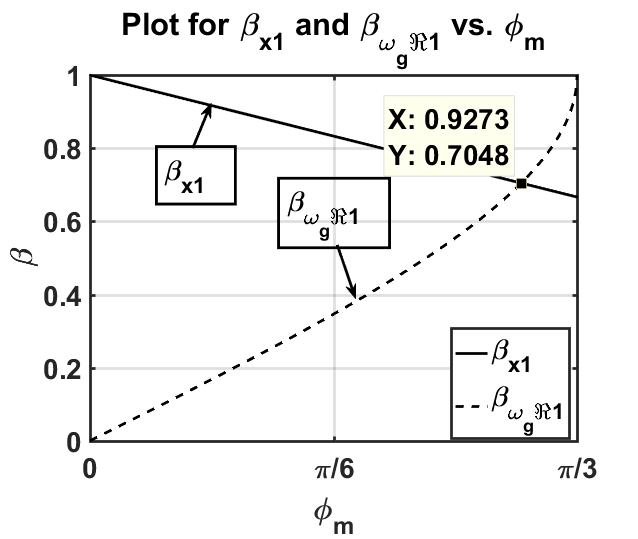}}
			&
			\subfigure[]{
				\includegraphics[width=1.6in] {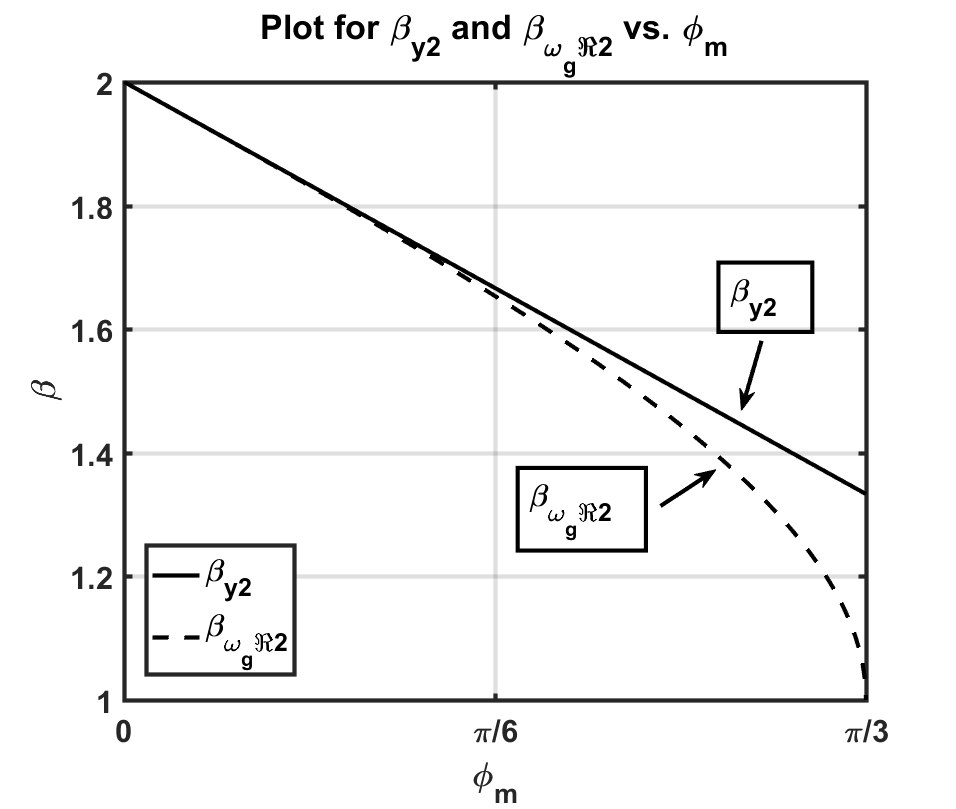}}
		\end{tabular}
		\caption{Graphical representation of (a) $\phi_m$ vs. $\beta_{x1}$ and $\phi_{m}$ vs. $\beta_{{\omega _g}\Re 1}$ (b)  $\phi_{m}$ vs. $\beta_{y2}$ and $\phi_{m}$ vs. $\beta_{{\omega _g}\Re 2}$}
		\label{pm_vs_betay1_betawgR1_etc}
	\end{figure}
	
	\textit{Case-II, $\phi_{m}\in[\pi/3,\pi/2]$:} 
	In this case, ${{\rm B}_{{\omega _g}\Re }} = \left\{ {\beta :\beta  \in \left( {0,2} \right)} \right\}$ and ${\beta _{{\omega _g} + }} = \left\{ {\beta :\beta  \in ({\beta _{x1}},{\beta _{y2}})} \right\}$. Since, ${{\rm B}_{{\omega _g}}} = {{\rm B}_{{\omega _g}\Re }} \cap {{\rm B}_{{\omega _g} + }}$, therefore, 
	${{\rm{B}}_{{\omega _g}}} = {{\rm{B}}_{{\omega _g} + }} = \left\{ {\beta :\beta  \in \left( {{\beta _{x1}},{\beta _{y2}}} \right)} \right\}$. This proves part (c) of the theorem.
	
	\textit{Case-III, $\phi_{m}\in(\pi/2, \pi)$:} In this case, ${{\rm B}_{{\omega _g}\Re }} = \left\{ {\beta :\beta  \in \left( {0,2} \right)} \right\}$ and ${{\rm{B}}_{{\omega _{g + }}}} = \left\{ {\beta :\beta  \in \left( {{\beta _{x1}},{\beta _{y1}}} \right)} \right\}$. Therefore, ${{\rm{B}}_{{\omega _g}}} = {{\rm{B}}_{{\omega _g} + }} = \left\{ {\beta :\beta  \in \left( {{\beta _{x1}},{\beta _{y1}}} \right)} \right\}$. This proves part (d) of the theorem. 
\end{proof}
\subsection{Finding $\omega_{p}\in \Xi _b $:}
Eliminating $\lambda$ in (\ref{GM_condition_eq3_tosolve}) and (\ref{GM_condition_eq4_tosolve}), we get
\begin{equation} \label{wp_from_GM_condn}
\tan \left( {\frac{{\beta \pi }}{2}} \right) = \frac{{{A_m}\sin \left( {\theta {\omega _p}} \right) - \sin \left( {\theta {\omega _p}} \right)}}{{\cos \left( {\theta {\omega _p}} \right) - {A_m}\cos \left( {\theta {\omega _p}} \right) - 1}}
\end{equation}
In (\ref{wp_from_GM_condn}), there are four variables $\omega_g$, $\beta, \theta$  and $A_m$. Here $\theta$ is the delay of the process and known, $A_m$ is the desired gain margin specification and $\beta$ and $\omega_{p}$ are unknown variables that need to be evaluated.
Simplifying (\ref{wp_from_GM_condn}), we have
\begin{align*}
\begin{array}{c}
\tan \left( {\frac{{\beta \pi }}{2}} \right)\left( {1 - A_m} \right)\cos \left( {\theta {\omega _p}} \right) - \left( {A_m - 1} \right)\sin \left( {\theta {\omega _p}} \right) = 
\tan \left( {\frac{{\beta \pi }}{2}} \right)
\end{array}
\end{align*} 
\begin{align} \label{wp_equation_initial}
\text{or,} \qquad	\tan \left( {\frac{{\beta \pi }}{2}} \right)\cos \left( {\theta {\omega _p}} \right) + \sin \left( {\theta {\omega _p}} \right) = \frac{{\tan \left( {\frac{{\beta \pi }}{2}} \right)}}{{1 - {A_m}}}
\end{align}
which can be further simplified as
\begin{align}  \label{wp_in_a2b2c2}
{a_2}\cos \left( {\theta {\omega _p}} \right) + {b_2}\sin \left( {\theta {\omega _p}} \right) = {c_2} \\
\label{a2b2c2_values}
\text{where} \qquad
\begin{array}{c}
a_2 = \tan \left( {\frac{{\beta \pi }}{2}} \right); \,
b_2 = 1; \, 
c_2 = \frac{{\tan \left( {\frac{{\beta \pi }}{2}} \right)}}{{1 - {A_m}}}
\end{array}
\end{align}
Now, let $a_2 = r_2 \cos \left( \alpha_2  \right)$ and $b_2 = r_2 \sin \left( \alpha_2  \right)$, 
where $ {r_2} = \sqrt {a_2^2 + b_2^2}$ and ${\alpha _2} = {\tan ^{ - 1}}\left( {\frac{{{b_2}}}{{{a_2}}}} \right)$. 
Using trigonometric identity,
$\cos(x)\cos(y)+\sin(x)\sin(y) = \cos(x-y)$. Assuming $\theta\omega_p = x$ and $\alpha_2=y$,
(\ref{wp_in_a2b2c2}) becomes
\begin{equation} \label{wp_equation_initial_simplified}
\qquad \cos \left( {\theta {\omega _p} - {\alpha _2}} \right) = \frac{c_2}{r_2}
\end{equation} 
\begin{lemma}[Simplification for $c_2/r_2$]
	If $a_2$, $b_2$ and $c_2$ are as given in (\ref{a2b2c2_values}), then $c_2/r_2$ can be given as
	\begin{equation} \label{c2_by_r2_final}
	\frac{{{c_2}}}{{{r_2}}} =  - \frac{{\sin \left( {\frac{{\beta \pi }}{2}} \right)}}{{{A_m} - 1}}
	\end{equation}
\end{lemma}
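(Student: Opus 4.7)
The plan is to replicate the style of Lemma~1: compute $r_2=\sqrt{a_2^2+b_2^2}$ in closed form using a Pythagorean identity, and then form the quotient $c_2/r_2$ directly from the definitions in (\ref{a2b2c2_values}).

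First I would evaluate the sum of squares
\begin{equation*}
a_2^2+b_2^2 \;=\; \tan^2\!\left(\frac{\beta\pi}{2}\right) + 1 \;=\; \sec^2\!\left(\frac{\beta\pi}{2}\right),
\end{equation*}
where the last equality is the standard identity $1+\tan^2 x = \sec^2 x$. Taking the square root (with the same positive-root convention used in the proof of Lemma~1) yields $r_2 = 1/\cos(\beta\pi/2)$.

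Next I would substitute this $r_2$ together with $c_2 = \tan(\beta\pi/2)/(1-A_m)$ from (\ref{a2b2c2_values}) and simplify using $\tan(\beta\pi/2)\cos(\beta\pi/2) = \sin(\beta\pi/2)$, obtaining
\begin{equation*}
\frac{c_2}{r_2} \;=\; \frac{\tan(\beta\pi/2)}{1-A_m}\cdot\cos\!\left(\frac{\beta\pi}{2}\right) \;=\; \frac{\sin(\beta\pi/2)}{1-A_m}.
\end{equation*}
A final sign flip $1/(1-A_m) = -1/(A_m-1)$ brings the expression to the target form stated in~(\ref{c2_by_r2_final}).

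There is essentially no serious obstacle here; the whole argument is a two-line trigonometric simplification. The only delicate point worth flagging is the sign convention when $\beta\in(1,2)$, where $\cos(\beta\pi/2)<0$ and so the ``positive'' square root of $\sec^2(\beta\pi/2)$ is formally $-\sec(\beta\pi/2)$. As in Lemma~1, I would simply adopt the algebraic identification $r_2 = \sec(\beta\pi/2)$, since the subsequent use of $c_2/r_2$ inside $\cos^{-1}(\cdot)$ in (\ref{wp_equation_initial_simplified}) is what matters, and this convention keeps the formula consistent with the earlier lemma.
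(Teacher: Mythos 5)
Your proposal is correct and matches the paper's own proof, which likewise computes $r_2=\sqrt{a_2^2+b_2^2}=\sec\left(\frac{\beta\pi}{2}\right)$ via $1+\tan^2 x=\sec^2 x$ and then forms the quotient with the sign flip $1/(1-A_m)=-1/(A_m-1)$. Your remark about the branch where $\cos\left(\frac{\beta\pi}{2}\right)<0$ for $\beta\in(1,2)$ is a point the paper glosses over, but you resolve it with the same convention the paper implicitly uses.
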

\begin{proof}
	\textcolor{black}{	With $a_2$ and $b_2$ from (\ref{a2b2c2_values}), we get}
	\textcolor{black}{${r_2} = \sec \left( {\frac{{\beta \pi }}{2}} \right)$. Therefore,}
	\textcolor{black}{we have}	
	\begin{equation}
	\frac{{{c_2}}}{{{r_2}}} =  - \frac{{\sin \left( {\frac{{\beta \pi }}{2}} \right)}}{{{A_m} - 1}}
	\end{equation}
\end{proof}
\textcolor{black}{Note that $c_2/r_2$ in (\ref{c2_by_r2_final}), is a negative quantity because $\beta\in(0,2)$ and $A_m>1$.}
\begin{lemma}[Simplification of $\alpha_2$]
	$\alpha_2$ in (\ref{wp_equation_initial_simplified}) can be simplified as
	\begin{align} \label{alpha2_final}
	{\alpha _2} = \frac{\pi }{2} - \frac{{\beta \pi }}{2}
	\end{align}
	where ${\alpha _2} = {\tan ^{ - 1}}\left( {\frac{{{b_2}}}{{{a_2}}}} \right)$.
\end{lemma}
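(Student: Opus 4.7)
The plan is to substitute the explicit forms of $a_2$ and $b_2$ from (\ref{a2b2c2_values}) into $\alpha_2 = \tan^{-1}(b_2/a_2)$ and then rewrite the resulting reciprocal of a tangent as a tangent using the standard co-function identity, after which only a check on the fundamental period of $\arctan$ remains.

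First I would write
\[
\alpha_2 = \tan^{-1}\!\left(\frac{b_2}{a_2}\right) = \tan^{-1}\!\left(\frac{1}{\tan\!\left(\tfrac{\beta\pi}{2}\right)}\right) = \tan^{-1}\!\left(\cot\!\left(\tfrac{\beta\pi}{2}\right)\right),
\]
which is well defined since $\beta \in (0,2)$ excludes $\beta\pi/2 = 0$, so $\tan(\beta\pi/2) \neq 0$ as long as $\beta \neq 0$ (and at $\beta=1$ we interpret the expression via continuity, since $1/\tan(\pi/2)=0$ and $\arctan(0)=0=\pi/2-\pi/2$, which already matches the claimed formula). Then, invoking $\cot(x) = \tan(\pi/2 - x)$, I would obtain
\[
\alpha_2 = \tan^{-1}\!\left(\tan\!\left(\tfrac{\pi}{2} - \tfrac{\beta\pi}{2}\right)\right).
\]

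The only technical point, analogous to the argument handling in Lemma~2, is to verify that the inner argument lies in the fundamental period $(-\pi/2,\pi/2)$ of $\arctan$ so that no origin shift is needed. For $\beta \in (0,2)$, we have $\beta\pi/2 \in (0,\pi)$ and hence $\pi/2 - \beta\pi/2 \in (-\pi/2,\pi/2)$, which is precisely the range where $\tan^{-1}(\tan(x)) = x$. This directly yields
\[
\alpha_2 = \frac{\pi}{2} - \frac{\beta\pi}{2},
\]
as claimed in (\ref{alpha2_final}).

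I do not anticipate any serious obstacle here; unlike the $\alpha_1$ case in Lemma~2, where $\beta\pi/2 + \phi_m/2$ could exceed $\pi/2$ and required a piecewise definition, the argument $\pi/2 - \beta\pi/2$ safely lives inside $(-\pi/2,\pi/2)$ across the entire admissible range $\beta \in (0,2)$. The mildest subtlety is the boundary value $\beta = 1$, where $a_2 = \tan(\pi/2)$ blows up; I would note that the identity $\cot(x) = \tan(\pi/2 - x)$ nevertheless gives a continuous value for $\alpha_2$ at $\beta=1$, so a single clean expression covers the whole interval.
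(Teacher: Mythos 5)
Your proof is correct and follows essentially the same route as the paper: substitute $a_2=\tan(\beta\pi/2)$, $b_2=1$, rewrite the reciprocal as $\cot(\beta\pi/2)=\tan(\pi/2-\beta\pi/2)$, and observe that $\pi/2-\beta\pi/2\in(-\pi/2,\pi/2)$ for $\beta\in(0,2)$ so no origin shift is needed. Your additional remark on handling $\beta=1$ (where $\tan(\pi/2)$ is undefined) by continuity is a small refinement the paper omits, but it does not change the argument.
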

\begin{proof}
	We have $\alpha_2=\tan^{-1}(b_2/a_2)$. Substituting $a_2$ and $b_2$ from (\ref{a2b2c2_values}), we get
	\begin{equation} \label{alpha2_simplified_2}
	{\alpha _2} = {\tan ^{ - 1}}\left( {\frac{1}{{\tan \frac{{\beta \pi }}{2}}}} \right) = 
	{\tan ^{ - 1}}\left( {\cot \frac{{\beta \pi }}{2}} \right)
	\end{equation}
	Using \textcolor{black}{trigonometric identity}, $\cot(x) = \tan(\pi/2 - x) \forall x\in\Re$, we get
	\[{\alpha _2} = {\tan ^{ - 1}}\left( {\tan \left( {\frac{\pi }{2} - \frac{{\beta \pi }}{2}} \right)} \right)\]
	In (\ref{alpha2_simplified_2}), for $\beta\in(0,2)$, the argument lies in $(-\pi/2,\pi/2)$. Now, $\tan^{-1}(\tan( x )) = x$ when $x\in [-\pi/2, \pi/2]$. Therefore,
	\begin{align} \label{alpha_2_simplified_in_beta}
	{\alpha _2} = \frac{\pi }{2} - \frac{{\beta \pi }}{2}
	\end{align}
\end{proof}
\begin{theorem}[Solution of $\omega_{p}$]
	If  $\beta\in(0,2)$ and $A_m>1$, then
	\begin{equation} \label{wp_final_simplified}
	{\omega _p} = \frac{1}{\theta }\left( {\pi  - {{\cos }^{ - 1}}\left( {\frac{{\sin \left( {\frac{{\beta \pi }}{2}} \right)}}{{ {A_m}-1}}} \right) + \frac{\pi }{2} - \frac{{\beta \pi }}{2}} \right)
	\end{equation}
\end{theorem}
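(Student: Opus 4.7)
The approach mirrors the argument used for $\omega_g$ in Theorem 1, but applied to the pair (\ref{GM_condition_eq3_tosolve})--(\ref{GM_condition_eq4_tosolve}) through the intermediate equation (\ref{wp_equation_initial_simplified}). First, I would invoke Lemmas 4 and 5 to substitute the simplified forms for $c_2/r_2$ and $\alpha_2$ directly into (\ref{wp_equation_initial_simplified}), yielding
\begin{equation*}
\cos\!\left(\theta\omega_p - \tfrac{\pi}{2} + \tfrac{\beta\pi}{2}\right) = -\,\frac{\sin(\beta\pi/2)}{A_m - 1}.
\end{equation*}
This isolates a single cosine on the left, with a closed-form expression depending only on $\beta$ and $A_m$ on the right.

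Next, I would take $\cos^{-1}$ of both sides. The key observation is that for $\beta\in(0,2)$ we have $\sin(\beta\pi/2)>0$, and $A_m>1$, so the right-hand side is strictly negative. Applying the identity $\cos^{-1}(-x) = \pi - \cos^{-1}(x)$ converts the negative argument into the positive one appearing in the statement. Solving the resulting linear equation for $\theta\omega_p$ and dividing by $\theta$ then gives
\begin{equation*}
\omega_p = \frac{1}{\theta}\left(\pi - \cos^{-1}\!\left(\frac{\sin(\beta\pi/2)}{A_m - 1}\right) + \frac{\pi}{2} - \frac{\beta\pi}{2}\right),
\end{equation*}
which is exactly the claimed expression.

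The main obstacle, as in the derivation of $\omega_g$, is branch selection for $\cos^{-1}$. The equation $\cos(y)=z$ admits the family $y = \pm\cos^{-1}(z) + 2k\pi$, so I must justify taking the principal branch with $k=0$ and a plus sign. To do this, I would examine the natural ranges involved: $\alpha_2 = \pi/2 - \beta\pi/2 \in (-\pi/2,\pi/2)$ for $\beta\in(0,2)$, and, since $c_2/r_2 \in [-1,0)$, the principal value $\cos^{-1}(c_2/r_2) \in (\pi/2,\pi]$. The $+$ branch therefore gives $\theta\omega_p = \alpha_2 + \cos^{-1}(c_2/r_2) > 0$ in the intended operating region, whereas the $-$ branch would force a non-positive $\omega_p$ and must be discarded on physical grounds. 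I would also flag the implicit feasibility requirement $|c_2/r_2|\le 1$, i.e., $A_m - 1 \ge \sin(\beta\pi/2)$, which the stated formula tacitly assumes; a precise characterization of the $\beta$ values making $\omega_p$ real and positive would be handled in a subsequent theorem analogous to Theorems 2--4.
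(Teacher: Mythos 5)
Your proposal follows essentially the same route as the paper: start from $\cos(\theta\omega_p-\alpha_2)=c_2/r_2$, take the principal branch to get $\omega_p=\frac{1}{\theta}\left(\cos^{-1}(c_2/r_2)+\alpha_2\right)$, use $\cos^{-1}(-x)=\pi-\cos^{-1}(x)$ on the negative right-hand side, and substitute $\alpha_2=\pi/2-\beta\pi/2$. Your explicit justification of the $+$ branch (via $\alpha_2\in(-\pi/2,\pi/2)$ and $\cos^{-1}(c_2/r_2)\in(\pi/2,\pi]$) and your flagging of the $|c_2/r_2|\le 1$ feasibility condition are points the paper leaves implicit here and defers to Theorems 6--7, but they do not change the argument.
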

\begin{proof}
	The expression of $\omega_{p}$ can be found from (\ref{wp_equation_initial_simplified}) as
	\begin{equation} \label{wp_initial}
	{\omega _p} = \frac{1}{\theta }\left( {{{\cos }^{ - 1}}\left( {\frac{{{c_2}}}{r_2}} \right) + {\alpha _2}} \right)
	\end{equation}
	Let us assume ${\cos ^{ - 1}} (c_2/r_2) = \zeta$ and from (\ref{c2_by_r2_final}), $c_2/r_2 < 0$. Therefore, using inverse trigonometric property, $\cos^{-1}(-x) = \pi - \cos^{-1}(x); \, x>0$, we have
	\begin{equation} \label{zeta}
	\zeta  = \pi  - {\cos ^{ - 1}}\left( {\frac{{\sin \left( {\frac{{\beta \pi }}{2}} \right)}}{{{A_m} - 1}}} \right)
	\end{equation}
	From (\ref{wp_initial}), (\ref{zeta}) and (\ref{alpha2_final}) we have
	\begin{equation} \label{wp_final_in_zeta_alpha2}
	{\omega _p} = \frac{1}{\theta }\left( {\zeta  + \alpha_2 } \right)
	\end{equation}
	Substituting $\zeta$ from (\ref{zeta}) and $\alpha_2$ from (\ref{alpha2_final}), we get
	\begin{equation} 
	{\omega _p} = \frac{1}{\theta }\left( {\pi  - {{\cos }^{ - 1}}\left( {\frac{{\sin \left( {\frac{{\beta \pi }}{2}} \right)}}{{ {A_m}-1}}} \right) + \frac{\pi }{2} - \frac{{\beta \pi }}{2}} \right)
	\end{equation}
\end{proof}
\begin{remark} \label{remark_1}
	For $A_m = 1$, $c_2/r_2 \to \infty$ . Therefore, $A_m = 1$ cannot be chosen as controller design specification. Whereas, $A_m<1$ signifies an unstable closed loop system, thus irrelevant from control point of view.
\end{remark}

\subsection{Finding $\beta_{\omega_{p}}$ from set $\Xi_b$: 
}
For any practical control system, we have $\omega_p \in \Re^{+}$. 
Let ${\rm  B}_{{\omega _p}}$ is the set of those $\beta$ which satisfies some given $\omega_{p}$. Therefore, it is justified to write ${{\rm B} _{{\omega _p}}} = \left\{ {\beta :{\omega _p} \in {\Re ^ + }} \right\}$.

In (\ref{wp_final_simplified}), $\omega_p$ is function of $\theta$, $\beta$ and $A_m$, where $\theta$ is the process delay, $A_m$ is the desired gain margin and $\beta\in(0,2)$. 
The solution can be found in two steps. First, find ${{\rm B}_{{\omega _p}\Re }} = \left\{ {\beta :{\omega _p} \in \Re } \right\}$ and second, find ${{\rm B}_{{\omega _p} + }} = \left\{ {\beta :{\omega _p} > 0} \right\}$. Then, ${{\rm B}_{{\omega _p}}} = {{\rm B}_{{\omega _p}\Re }} \cap {{\rm B}_{{\omega _p} + }}$.
\begin{theorem}[Condition for $\omega_p\in\Re$]
	If $\omega_{p} \in \Re $ is as given in (\ref{wp_initial}), where $c_2/r_2$ is as given in (\ref{c2_by_r2_final}), then for $\beta\in(0,2)$ and $A_m \ge2 $, ${{\rm B} _{{\omega _p} + }} = \left\{ {\beta\in(0,2) :{\omega _p} > 0} \right\}$.
\end{theorem}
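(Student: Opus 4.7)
The plan is to split the analysis into two sub-claims matching the two-step decomposition ${\rm B}_{\omega_p} = {\rm B}_{\omega_p \Re} \cap {\rm B}_{\omega_p +}$ described just before the theorem. First I would show that under $A_m \ge 2$ the reality constraint is automatically satisfied on the entire interval $\beta \in (0,2)$; then I would show that positivity also holds throughout that interval, using the closed-form expression (\ref{wp_final_simplified}) for $\omega_p$.

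For reality, I would observe that $\omega_p \in \Re$ exactly when the argument of $\cos^{-1}$ in (\ref{wp_final_simplified}), namely $\sin(\beta\pi/2)/(A_m-1)$, lies in $[-1,1]$. For $\beta \in (0,2)$ we have $\beta\pi/2 \in (0,\pi)$, so $\sin(\beta\pi/2) \in (0,1]$; hence the argument is nonnegative, and the only nontrivial requirement is $\sin(\beta\pi/2) \le A_m - 1$. Since $\sin(\beta\pi/2) \le 1$, the assumption $A_m \ge 2$ delivers this uniformly in $\beta$, so ${\rm B}_{\omega_p \Re} = (0,2)$.

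For positivity, I would set $u := \sin(\beta\pi/2)/(A_m-1)$, which by the previous step lies in $(0,1]$ for $\beta \in (0,2)$, so $\cos^{-1}(u) \in [0,\pi/2)$. Rewriting (\ref{wp_final_simplified}) as
\[
\theta\,\omega_p \;=\; \tfrac{3\pi}{2} \;-\; \cos^{-1}(u) \;-\; \tfrac{\beta\pi}{2},
\]
the claim $\omega_p > 0$ reduces to $\cos^{-1}(u) + \beta\pi/2 < 3\pi/2$. I would bound the two summands: $\cos^{-1}(u) \le \pi/2$ always, and $\beta\pi/2 < \pi$ strictly because $\beta < 2$. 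Adding gives the desired strict inequality, hence $\omega_p > 0$ for every $\beta \in (0,2)$.

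The proof is essentially a pair of one-line bounding arguments, so there is no deep obstacle; the only subtlety I would be careful with is tracking strict versus non-strict inequalities so that the final $\omega_p > 0$ is strict. In particular I would note that equality $\cos^{-1}(u) = \pi/2$ would force $u = 0$ and therefore $\sin(\beta\pi/2) = 0$, which is excluded on the open interval $\beta \in (0,2)$; combined with the strict bound $\beta\pi/2 < \pi$, this guarantees the summed bound is strict. Combining the two sub-claims yields ${\rm B}_{\omega_p +} = \{\beta \in (0,2)\}$, as stated.
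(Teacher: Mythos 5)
Your proof is correct and follows essentially the same route as the paper: the reality half is the paper's own argument (drop the modulus since both sides are positive, then observe $\sin(\beta\pi/2)\le 1\le A_m-1$ — which the paper justifies by pointing to a figure rather than by your cleaner inequality chain), and your positivity half reproduces, in equivalent form, the bounds $\gamma_1=\pi-\cos^{-1}(u)\in(\pi/2,\pi]$ versus $\gamma_2=-\pi/2+\beta\pi/2\in(-\pi/2,\pi/2)$ that the paper defers to its next theorem. Note that the conclusion of the statement as printed is evidently a typo — it names ${\rm B}_{\omega_p +}$ where ${\rm B}_{\omega_p \Re}=\{\beta\in(0,2):\omega_p\in\Re\}$ is meant — so the paper's proof of this particular theorem covers only the reality part; you sensibly proved both halves.
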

\begin{proof}
	For $\omega_p \in \Re$, $\left| c_2/r_2 \right| \le 1$. Hence, (\ref{c2_by_r2_final}) can be written as
	\begin{equation} \label{wp_in_real_initial}
	\left|\frac{\sin\left( \frac{\beta \pi}{2} \right)}{A_m-1} \right| \le 1
	\end{equation}
	For $\beta\in(0,2)$, $\sin\left(\beta\pi/2 \right) > 0$ and also for $A_m\ge 2$, $A_m-1>0$. Therefore, the modulus sign can be eliminated in in left hand side in  (\ref{wp_in_real_initial}), thus, we have	
	\begin{equation} \label{wp_in_real_final_in_mod_form}
		\sin \left( {\frac{{\beta \pi }}{2}} \right) \le  { {A_m} -  1}
	\end{equation}
	\begin{figure}
		\centering
		\includegraphics[width=2in] {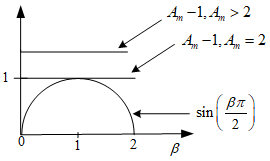}
		\caption{Graphical representation of (\ref{wp_in_real_final_in_mod_form})}
		\label{fig_wp_real_graphical}
	\end{figure}
	In Fig.\ref{fig_wp_real_graphical}, LHS and RHS of (\ref{wp_in_real_final_in_mod_form}) is plotted for $\beta \in (0,2)$ and $A_m\ge2$ respectively.
	From the figure, it is clear that the condition in (\ref{wp_in_real_final_in_mod_form}), i.e., for $A_m \ge 2$, $\omega_p \in \Re \, \forall \, \beta \in (0,2)$.
\end{proof}

\begin{theorem}[Condition for $\omega_p > 0$ ]
	If $\beta\in(0,2)$ and $A_m\ge1$, then  ${{\rm B} _{{\omega _p} + }} = \left\{ {\beta\in(0,2) :{\omega _p} > 0} \right\}$.
\end{theorem}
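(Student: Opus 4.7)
The plan is to work directly from the closed-form expression for $\omega_p$ derived in Theorem 6, namely (\ref{wp_final_simplified}), and show that each ``piece'' of that expression stays within a range that forces the overall quantity to be strictly positive for every $\beta \in (0,2)$. Using the decomposition $\omega_p = (\zeta + \alpha_2)/\theta$ from (\ref{wp_final_in_zeta_alpha2}), with $\zeta = \pi - \cos^{-1}(\sin(\beta\pi/2)/(A_m-1))$ from (\ref{zeta}) and $\alpha_2 = \pi/2 - \beta\pi/2$ from Lemma 5, it suffices to establish $\zeta + \alpha_2 > 0$ for the indicated parameter range (and since $\theta > 0$ is a physical delay, the division preserves the sign).

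The first step is to pin down the range of $\zeta$. For $\beta \in (0,2)$ we have $\sin(\beta\pi/2) > 0$, and whenever $\omega_p$ is real (so that the argument of $\cos^{-1}$ lies in $[-1,1]$), the argument $\sin(\beta\pi/2)/(A_m-1)$ actually lies in $(0,1]$. Thus $\cos^{-1}$ of it lies in $[0, \pi/2)$, and hence $\zeta \in (\pi/2, \pi]$. Next, I would bound $\alpha_2$: since $\beta \in (0,2)$, the quantity $\alpha_2 = \pi/2 - \beta\pi/2$ lies in $(-\pi/2, \pi/2)$.

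With those two ranges in hand, I would split into two cases based on the sign of $\alpha_2$. If $\beta \in (0,1]$, then $\alpha_2 \in [0, \pi/2)$ is non-negative and $\zeta > \pi/2 > 0$, so $\omega_p > 0$ immediately. If $\beta \in (1,2)$, then $\alpha_2 \in (-\pi/2, 0)$ and positivity of $\omega_p$ reduces to showing $\zeta > -\alpha_2 = \beta\pi/2 - \pi/2$. Since $\beta < 2$ gives $-\alpha_2 < \pi/2$, and since $\zeta > \pi/2$ from the first step, the inequality follows. Combining both cases yields $\omega_p > 0$ throughout $\beta \in (0,2)$, which is precisely the claim that ${\rm B}_{\omega_p+} = \{\beta \in (0,2) : \omega_p > 0\}$ coincides with $(0,2)$.

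The main subtle point, and the one I would handle carefully, is the strictness of the lower bound $\zeta > \pi/2$ at the boundaries. The strictness of $\sin(\beta\pi/2)/(A_m-1) > 0$ must be checked (it follows from $\beta \in (0,2)$, since the sine is strictly positive on the open interval), ensuring that $\cos^{-1}$ is strictly less than $\pi/2$ and not equal to it; this is what prevents the degenerate case $\zeta = \pi/2$ from colliding with the upper endpoint $-\alpha_2 \to \pi/2$ as $\beta \to 2^-$. Once this strict positivity is secured, the case split is routine and completes the proof.
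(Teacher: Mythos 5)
Your proposal is correct and follows essentially the same route as the paper: both reduce the claim to $\zeta > -\alpha_2$ via the decomposition $\omega_p = (\zeta+\alpha_2)/\theta$, bound $\zeta \in (\pi/2,\pi]$ from the fact that the argument of $\cos^{-1}$ lies in $(0,1]$, bound $-\alpha_2 \in (-\pi/2,\pi/2)$, and conclude; your case split on the sign of $\alpha_2$ is a harmless refinement of the paper's one-line comparison of the two ranges. The only substantive difference is that you condition on $\omega_p$ being real to get the argument of $\cos^{-1}$ into $[0,1]$, whereas the paper silently strengthens the hypothesis from the stated $A_m \ge 1$ to $A_m \ge 2$ inside the proof — your handling is arguably the cleaner fix for that inconsistency.
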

\begin{proof}
	Referring to (\ref{wp_final_in_zeta_alpha2}), for  $\omega_p > 0$, we need
	\begin{equation} \label{wp_more_zero_initial}
	\zeta>-\alpha_2
	\end{equation}
	where $\zeta$ is given in (\ref{zeta}) and $\alpha_2$ is given in (\ref{alpha_2_simplified_in_beta}). Substituting $\zeta$ and $\alpha_2$, \textcolor{black}{(\ref{wp_more_zero_initial}) can be written as}
	\begin{equation} \label{wp_more_zero_initial2}
	\pi  - {\cos ^{ - 1}}\left( {\frac{{\sin \left( {\frac{{\beta \pi }}{2}} \right)}}{{{A_m} - 1}}} \right) >  - \frac{\pi }{2} + \frac{{\beta \pi }}{2}
	\end{equation}
	In (\ref{wp_more_zero_initial2}), let us denote,
	$\gamma _1 = \pi  - {\cos ^{ - 1}}\left( {\frac{{\sin \left( {\frac{{\beta \pi }}{2}} \right)}}{{{A_m} - 1}}} \right)  $ 
	and 
	$ \gamma _2 = - \frac{\pi }{2} + \frac{{\beta \pi }}{2} $.
	Therefore, we need to prove that $\gamma_1>\gamma_2$ for $A_m \ge 2$ and $\beta \in (0,2)$.
	
	For $\beta\in(0,1]$, $\frac{{\sin \left( {\beta \pi /2} \right)}}{{{A_m} - 1}} \in \left( {0,\frac{1}{{{A_m} - 1}}} \right]$ and for $\beta(1,2)$, $\frac{{\sin \left( {\beta \pi /2} \right)}}{{{A_m} - 1}} \in \left( {0,\frac{1}{{{A_m} - 1}}} \right)$. 
	However, ${\frac{1}{{{A_m} - 1}}} \in (0,1]$ for $A_m \ge 2$. Therefore, $\left( {\frac{{\sin \left( {{{\beta \pi }}/{2}} \right)}}{{{A_m} - 1}}} \right) \in (0,1]$ for all $\beta \in (0,2)$ and $A_m \ge 2$.	
	Hence, $\cos^{-1}\left( {\frac{{\sin \left( {{{\beta \pi }}/{2}} \right)}}{{{A_m} - 1}}} \right) \in [0,\pi/2)$ for $\beta\in(0,2)$ and $A_m \ge 2$, which results $\gamma_1\in(\pi/2,\pi]$.
	
	Similarly, for $\beta \in(0,2)$, $\gamma_2\in(-\pi/2,\pi/2)$. Therefore,  $\gamma_1>\gamma_2$ for all $\beta\in(0,2)$ and $A_m \ge 2$. Hence, it concludes that $\omega_p>0 \, \forall \, \{\beta\in(0,2), \, A_m \ge 2\}$.
\end{proof}
\textcolor{black}{From Theorem 6 and 7, we have ${{\rm B}_{{\omega _p}\Re }} = \left\{ {\beta  \in \left( {0,2} \right):{\omega _p} \in \Re } \right\}$ and \\
	${{\rm B}_{{\omega _p} + }} = \left\{ {\beta  \in \left( {0,2} \right):{\omega _p} > 0} \right\}$ respectively. Since ${{\rm B}_{{\omega _p}}} = {{\rm B}_{{\omega _p}\Re }} \cap {{\rm B}_{{\omega _p} + }}$, therefore, ${{\rm B}_{{\omega _p}}} = \left\{ {\beta  \in \left( {0,2} \right):{\omega _p} \in {\Re ^ + }} \right\}$.} 

From Theorem 4 and 7, it is evident that ${\rm B}_{{\omega _g}}  \subset {\rm B}_{{\omega _p} }$. Hence, while finding the solution, there is no need to find $\omega_{p}$ for all the values of $\beta\in(0,2)$.
\begin{remark} \label{remark_2}
	From (\ref{PM_condition_eq1_tosolve}) and (\ref{PM_condition_eq2_tosolve}), as $\phi_{m}$ increases, $\lambda_a = \lambda_1= \lambda_2$ increases and from (\ref{GM_condition_eq3_tosolve}) and (\ref{GM_condition_eq4_tosolve}), as $A_m$ increases, $\lambda_b = \lambda_3= \lambda_4$ also increases. This relation would be helpful to select desired $A_m$ and $\phi_{m}$, \textcolor{black}{in the situation when} $\lambda_a$ and $\lambda_b$ plots do not intersect. Suppose, $\lambda_a>\lambda_b\, \forall \, {\rm B}_{{\omega _g} }$ and no intersection happens, then reducing $\phi_{m}$ and/or increasing $A_m$ may result in intersection of $\lambda_a$ and $\lambda_b$ and the solution can be obtained. 
\end{remark}
\begin{remark} \label{remark_3}
	\textcolor{black}{
		A similar relationship can be found between $\beta$ vs. $\phi_{m}$ and $\beta$ vs. $A_m$. Plotting the relationship as in (\ref{wg_from_PM_condn}), $\beta$ could be seen to decrease with increase in $\phi_{m}$ and plotting (\ref{wp_from_GM_condn}), $\beta$ could be seen to decrease with increase in $A_m$. Hence, $\beta$ is inversely related to $A_m$ as well as $\phi_{m}$.}
\end{remark}
\subsection{Disturbance rejection analysis}
It is easy to prove that the proposed FO controller control can reject disturbances. The theoretical analysis can be done by finding output $Y(s)$ in terms of input $R(s)$ and disturbance $D(s)$ \cite{MMorari_RobustPC_Book_1989}.
From Fig.\ref{figure_IMC_strucutre}, $Y(s) = \eta(s)R(s) + \epsilon(s))D(s)$ where $\eta(s) = C(s)G_p(s)/(1+C(s)G_p(s)) = L(s)/(1+L(s))$ and $\epsilon(s) = 1/(1+C(s)G_p(s)) = 1/(1+L(s))$. 
For disturbance rejection, it is sufficient to show $\mathop {\lim }\limits_{s \to 0} \epsilon(s) = 0$. Using (7), we have $C(s)G_p(s) = L(s) = \frac{{{e^{ - \theta s}}}}{{\lambda {s^\beta } + 1 - {e^{ - \theta s}}}}$. Since, $\mathop {\lim }\limits_{s \to 0} L(s) = \infty $, we get, $\mathop {\lim }\limits_{s \to 0} \epsilon(s) = 0$ which guarantees disturbance rejection.
\subsection{Algorithm for determining FO-IMC controller parameters} \label{Section_Cdesign_steps}
Here we present the procedure to find solution of $\beta$ and $\lambda$ as per the discussed theory in a step-by-step manner.
\begin{enumerate}
	\item [\textbf{Step-1:}] Select a desired $A_m$ and $\phi_{m}$.
	\item [\textbf{Step-2:}] Find range of $\beta_{\omega_{g}}$ such that $\omega_g\in\Re^+$ using Theorem 4. Define	a $(n  \times 1)$ array of $\beta_{\omega_{g}}$, where $'n'$ is \textcolor{black}{large enough} so that $\beta_{{\omega _g} }$ is almost continuous.
	\item [\textbf{Step-3:}] Find array $\omega_g$ corresponding to each value of $\beta_{\omega_{g}}$ using (\ref{wg_sol_final_expanded}). So $\omega_g$ is also a array of size $(n \times 1)$.
	\item [\textbf{Step-4:}] Consider $\beta_{\omega_{p}} = \beta_{{\omega _g}}$.
	\item [\textbf{Step-5:}] Find array $\omega_p$ corresponding to each value of $\beta_{\omega_{p}}$ using (\ref{wp_final_simplified}). So $\omega_p$ will also be an array of size $(n \times 1)$.
	\item [\textbf{Step-6:}] Find $\lambda_a$ either from (\ref{PM_condition_eq1_tosolve}) or (\ref{PM_condition_eq2_tosolve}) for each \textcolor{black}{$\beta_{{\omega _g} }(k)$} and $\omega_g(k)$. $\lambda_a$ has size $(n \times 1)$. \textcolor{black}{Here $k$ is the sample number.}
	\item [\textbf{Step-7:}] Find $\lambda_b$ either from (\ref{GM_condition_eq3_tosolve}) or (\ref{GM_condition_eq4_tosolve})  for each $\beta_{\omega_{p}}(k)$ and $\omega_p(k)$. $\lambda_b$ will be an array of $(n \times 1)$.
	\item [\textbf{Step-8:}] Plot for $\beta_{\omega_{g}}$ vs. $\lambda_a$ and $\beta_{\omega_{g}}$ vs. $\lambda_b$ in the same figure. The intersection point of two curves will satisfy the given $\phi_{m}$ and $A_m$ 
	specifications simultaneously. Let us say the intersection point is $\beta^{\star}$ and $\lambda^{\star}$ and occurs at sample point $k^{\star}$.
	\item [\textbf{Step-9:}] The system will have $\omega_g^{\star}=\omega_g(k^{\star})$ and $\omega_p^{\star}=\omega_p(k^{\star})$. This can easily be found by referring to the $k^{\star th}$ position of vector $\omega_g(k)$ and $\omega_p(k)$.
	\item [\textbf{Step-10:}] If intersection doesn't occur then change desired $\phi_m$ and $A_m$ as per guidelines given in \textcolor{black}{Remark \ref{remark_2}}.
\end{enumerate}

\section{Examples}
Two different FOPTD process models are considered to show the scope of the proposed controller.  First is a lag dominant model $(\tau>>\theta)$ taken from \cite{WKHo_IMC_PID_GPM_2001} and second is a delay dominant model ($\tau<<\theta$) taken from \cite{YCTian_CompofDom_and_VDelay_1998}. The results are compared with the two available methods in \cite{IKaya_IMC_PI_GPM_2004} and \cite{WKHo_IMC_PID_GPM_2001}  which satisfies $\phi_{m}$ and $A_m$ simultaneously. These are IMC-PID and IMC-PI control technique respectively.
These are to be implemented in classical control structure as shown in Fig. \ref{figure_IMC_strucutre}(b).
The proposed FO-IMC controller is designed without any approximation of the delay term, therefore, this \textcolor{black}{should} be implemented in IMC structure as in Fig.\ref{figure_IMC_strucutre}(a).

\textcolor{black}{The desired specifications of} $A_m=3 \, (=9.54dB)$ and \textcolor{black}{$\phi_{m}=65\deg(=1.1345 rad)$} are considered. The possible $\phi_m$ according to \cite{WKHo_IMC_PID_GPM_2001} would be \textcolor{black}{$74.31\deg(=1.2970rad)$} whereas according to \cite{IKaya_IMC_PI_GPM_2004} it will be \textcolor{black}{$60\deg(=1.0472rad)$} \textcolor{black}{for the selected $A_m$.}
\subsection{Example-I}
A lag dominant process model is taken from \cite{WKHo_IMC_PID_GPM_2001}, which is
\begin{equation}
{G_p}\left( s \right) = \frac{{0.43}}{{148s + 1}}{e^{ - 40s}}
\end{equation}
\begin{figure} [!ht]
	\centering
	\begin{tabular}{ c c c }
		\renewcommand{\thesubfigure}{($a_1$)}
		\subfigure[]{\includegraphics[width=1.55in]{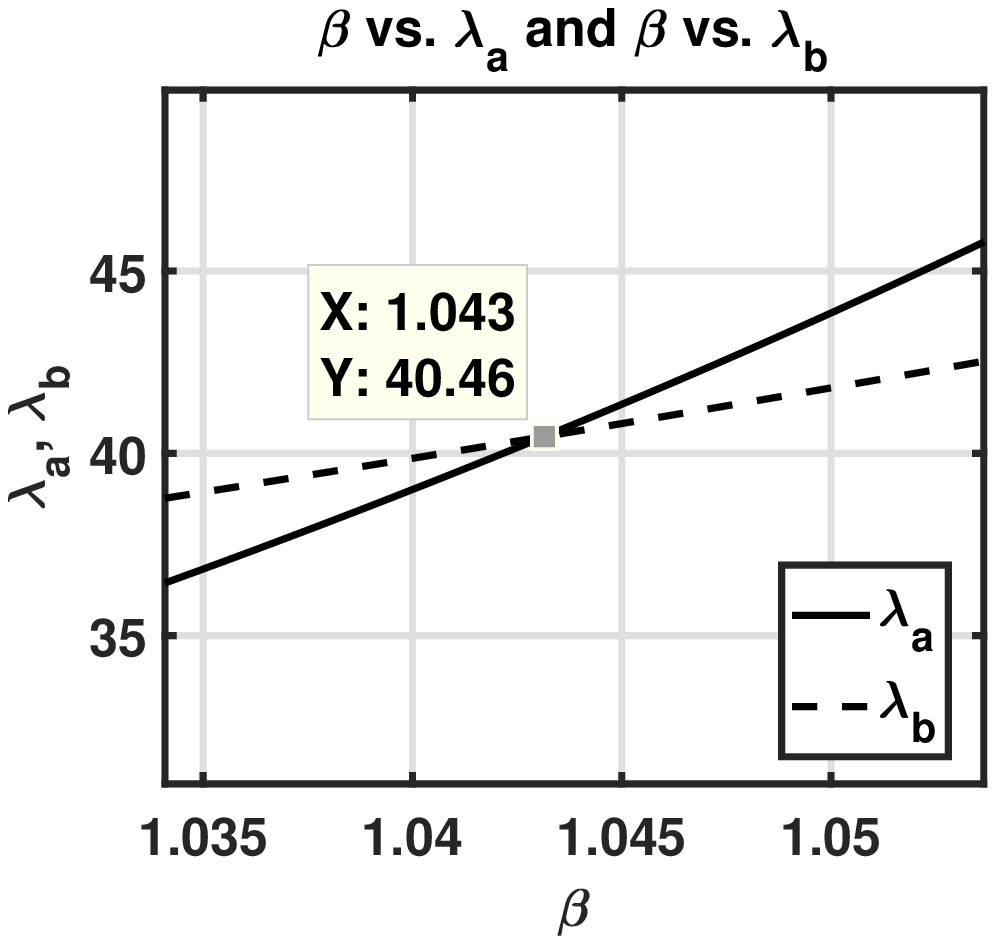}}
		&
				\renewcommand{\thesubfigure}{($b_1$)}
		\subfigure[]{\includegraphics[width=1.5in]{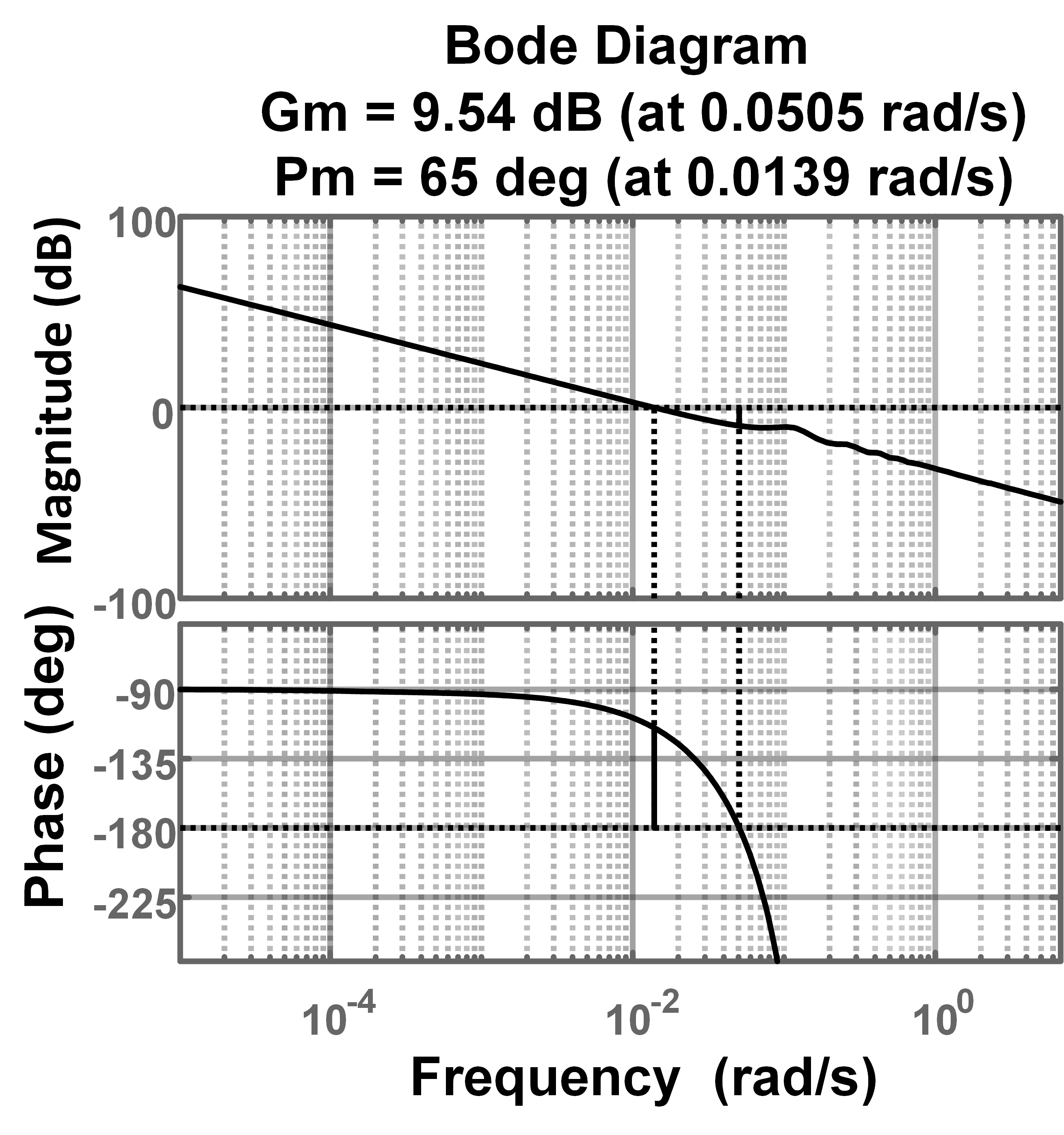}}
		&
				\renewcommand{\thesubfigure}{($c_1$)}
		\subfigure[]{\includegraphics[width=1.55in]{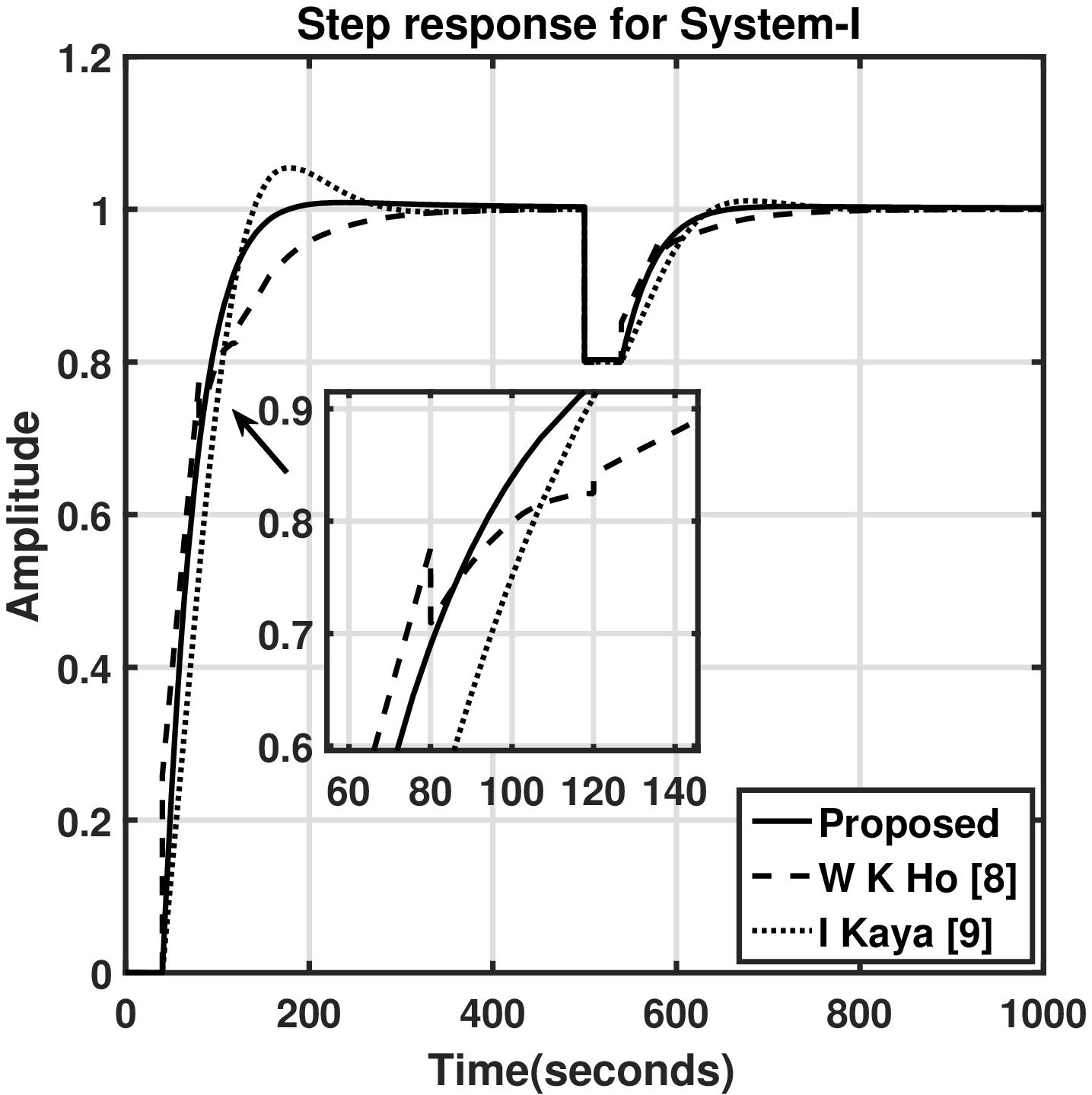}}
		\\
				\renewcommand{\thesubfigure}{($a_2$)}
		\subfigure[]{\includegraphics[width=1.5in]{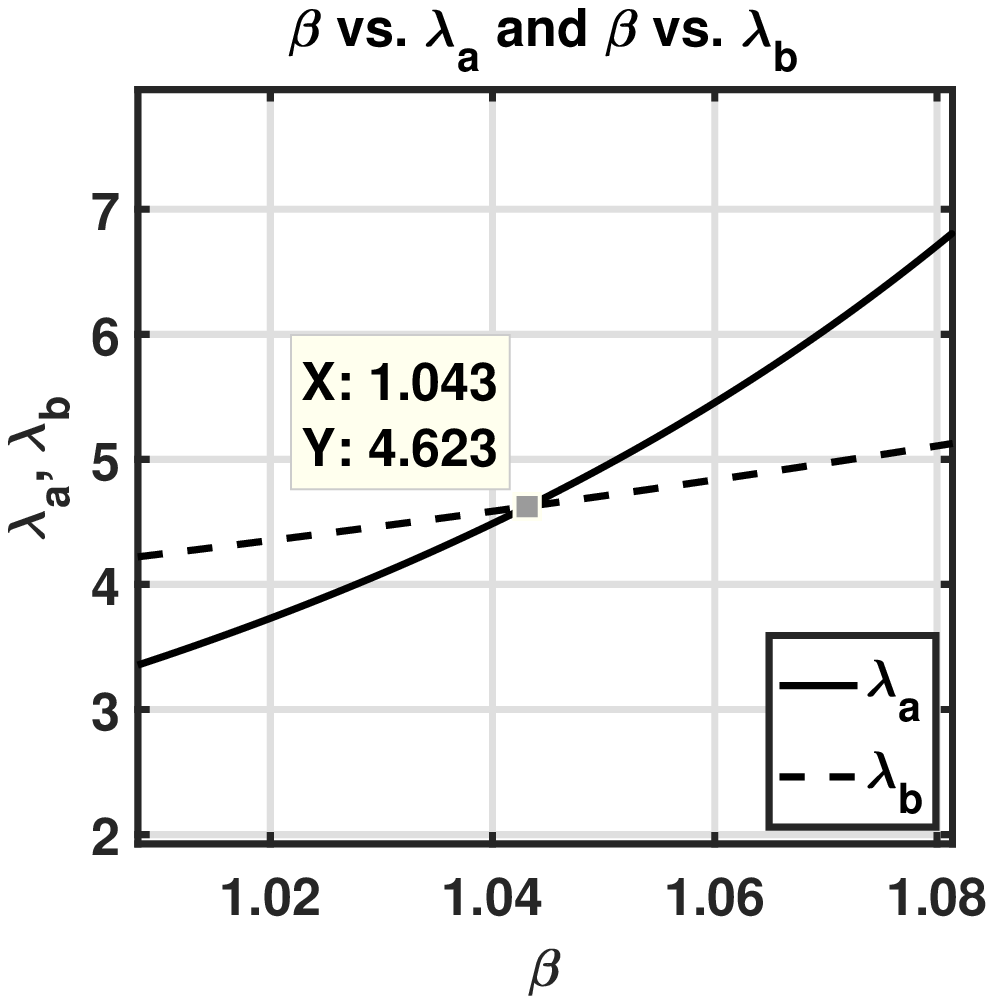}}
		&
		\renewcommand{\thesubfigure}{($b_2$)}
		\subfigure[]{\includegraphics[width=1.5in]{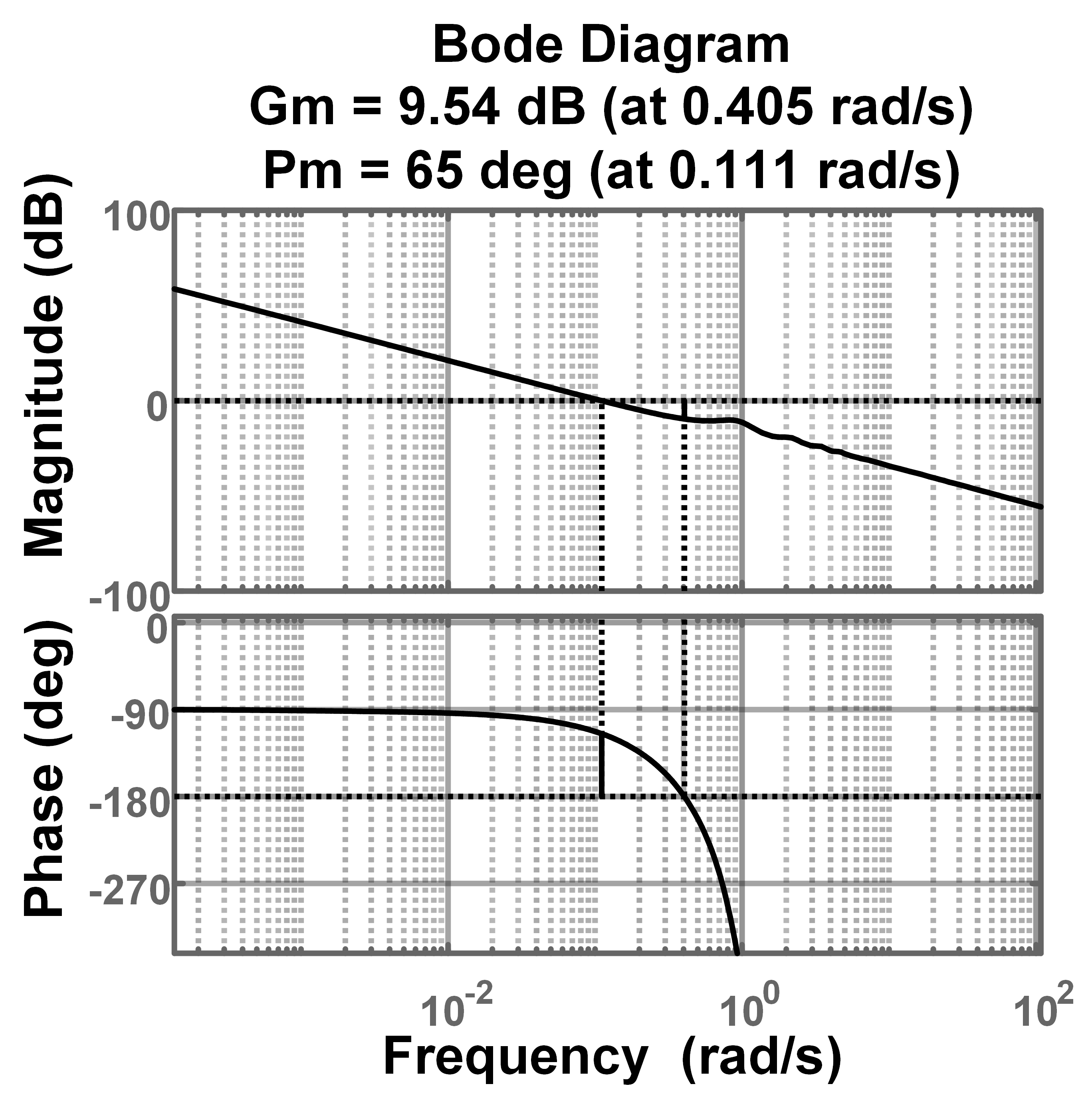}}
				&
		\renewcommand{\thesubfigure}{($c_2$)}
		\subfigure[]{\includegraphics[width=1.5in]{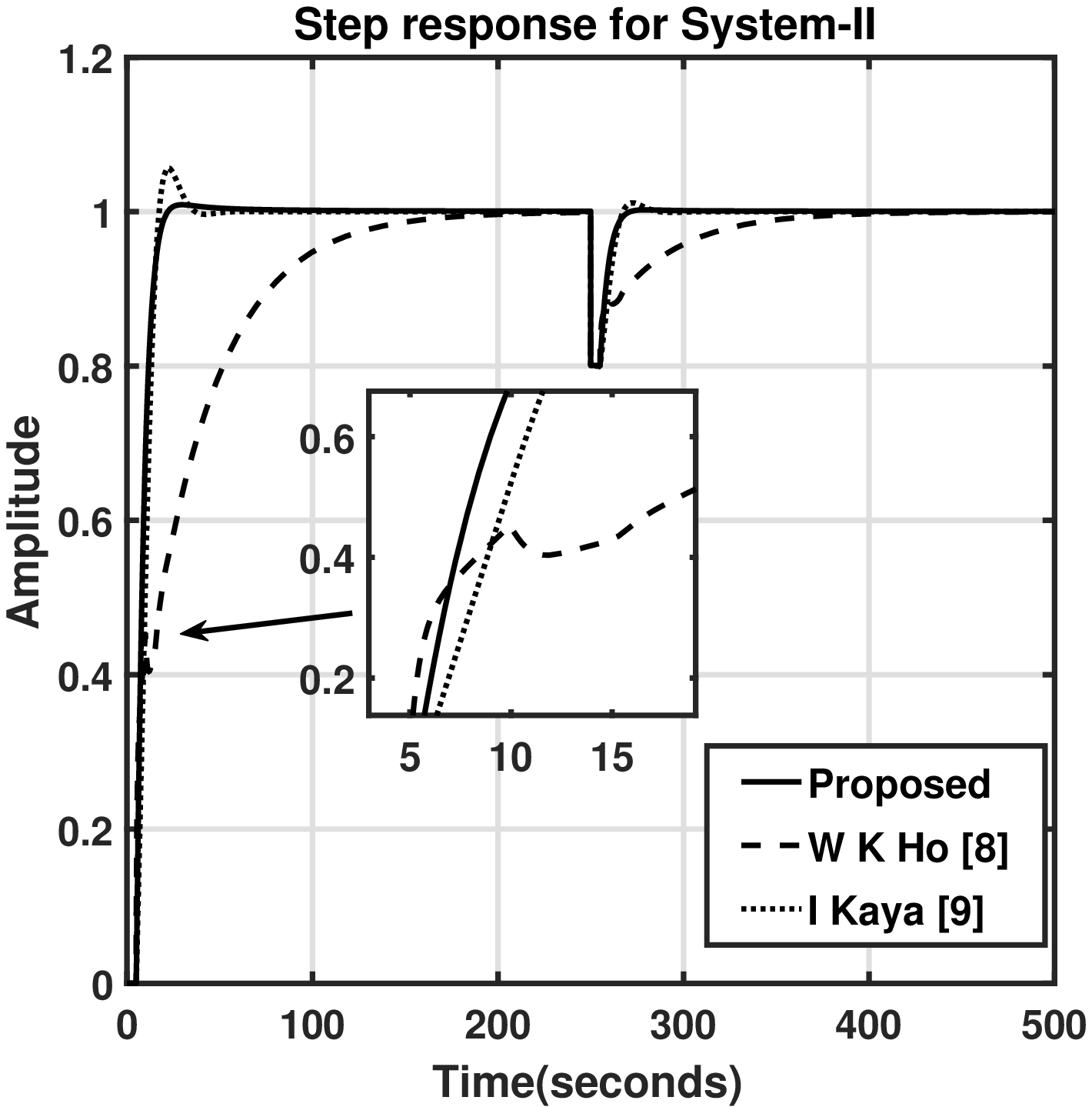}}		
	\end{tabular}
	\caption{Figure ($a_1$) and ($b_1$), $\beta\,vs.\,\lambda_a$ and $\beta\,vs.\,\lambda_b$. ($a_2$) and ($b_2$), Bode plot with proposed controller and ($a_3$) and ($b_3$), step response to compare different methods for system-I and II respectively. In $(b_1)$ and $b_2$, Gm is $A_m$ and Pm is $\phi_m$.}
	\label{fig_Exapmle_I_plots}
\end{figure}
Desired $A_m = 3$ and desired \textcolor{black}{$\phi_m = 65\deg (=1.1345 rad)$}.
Using controller design steps in Section-\ref{Section_Cdesign_steps}, $\beta_{x1} = 0.6389$ and $\beta_{y2}=1.2778$ are evaluated. Therefore, $\beta_{{\omega _g} }\in (0.6389,1.2778)$.
The controller parameters are obtained as $\beta^{\star} = 1.043$ and $\lambda^{\star} = 40.46$ (see Fig.\ref{fig_Exapmle_I_plots})($a_1$).
Corresponding to $\beta = 1.043$, we get $\omega_{g}^{\star} = 0.01391$ and $\omega_{p}^{\star} = 0.05066$.
The bode plot in Fig.\ref{fig_Exapmle_I_plots}($b_1$) shown that the simulation results are obtained as per theoretical calculations. 

For $A_m=3$, \cite{WKHo_IMC_PID_GPM_2001} will provide \textcolor{black}{$\phi_m = 74.31\deg (=1.2970rad)$} whereas \cite{IKaya_IMC_PI_GPM_2004} will provide \textcolor{black}{$\phi_m = 60 \deg(=1.0472rad)$}. In case of \cite{WKHo_IMC_PID_GPM_2001}, the obtained parameter are $\beta=1$ and $\lambda = 57.35$, which results in $PID = 5.05+0.03/s + 88.9945s$. Whereas, for \cite{IKaya_IMC_PI_GPM_2004}, $\beta=1$ and $\lambda=36.39$, resulting in $PI = 4.5056+148/s$. The step response is shown in Fig.\ref{fig_Exapmle_I_plots}($c_1$). \textcolor{black}{It is clear from the plot that the proposed method works best among the three.}
\subsection{Example-II}
A delay dominant process model is considered from \cite{YCTian_CompofDom_and_VDelay_1998}, which is
\begin{equation}
{G_p}\left( s \right) = \frac{{1}}{{0.5s + 1}}{e^{ - 5s}}
\end{equation}
The controller is designed for $A_m=3$ and $\phi_{m}=65\deg (=1.1345 rad)$.  Following steps in Section-\ref{Section_Cdesign_steps}, $\beta_{x1} = 0.6389$ and $\beta_{y2}=1.2778$ are evaluated. Therefore, $\beta_{{\omega _g} }\in(0.6389,1.2778)$. 
The controller parameters obtained are $\beta^{\star}=1.043$ and $\lambda^{\star} = 4.623$ (see Fig.\ref{fig_Exapmle_I_plots}($a_2$)). 
Corresponding to $\beta = 1.043$, we get $\omega_{g}^{\star} = 0.111$ and $\omega_{p}^{\star} = 0.405$.
The bode plot in Fig.\ref{fig_Exapmle_I_plots}($b_2$) shows that the simulation results are obtained as per the theoretical calculations. 

Corresponding to $A_m=3$, according to \cite{WKHo_IMC_PID_GPM_2001} the only possible $\phi_{m}$ is $ = 74.31\deg (=1.2970rad)$ and according to \cite{IKaya_IMC_PI_GPM_2004} it is $60\deg (=1.0472rad)$.
For \cite{WKHo_IMC_PID_GPM_2001}, $\beta=1$ and $\lambda = 7.1691$, giving $PID = 0.3103+0.1034/s+0.1293s$ and for \cite{IKaya_IMC_PI_GPM_2004}, $\beta=1$ and $\lambda = 4.5433$, giving $PI = 0.0524 + 0.5/s$. The step response is shown in Fig.\ref{fig_Exapmle_I_plots}($c_2$). \textcolor{black}{It is clear from the plot that the proposed method works best among the three.}

\section{Discussion and Conclusion}
An FO-IMC based controller is designed for desired gain margin and phase margin for a FOPTD process model. FO filter is used instead of IO filter \textcolor{black}{in the IMC structure} as it provides an additional parameter to tune as compared to \textcolor{black}{a single parameter in} IO filter. With only one tuning parameter in IO filter, the range of selection of desired $\phi_m$ and $A_m$ \textcolor{black}{is limited} to a 1-D curve. With two tuning parameters in FO filter, the range of selection of desired $\phi_{m}$ and $A_m$ becomes a 2-D surface. Therefore, they can be chosen independently.
The controller is designed without any approximation of the delay term appearing in the model of plant. 
To the best of authors' knowledge, this is the first attempt made in IMC literature where the controller is designed without any approximation of the delay term in the process model.

The proposed control strategy \textcolor{black}{should} be implemented in original IMC structure Fig.\ref{figure_IMC_strucutre}(a).
The controller is able to satisfy $\phi_m\in(0,\pi)$, $A_m>1$ for any $\theta/\tau>0$ which highly enhances the scope of the proposed control strategy.

\section*{Acknowledgment}
The authors are thankful to Prof. Ganti P. Rao, member of the UNESCO-EOLSS Joint Committee and Prof. Shankar P. Bhattacharya, TEXAS A\&MU, USA, for useful discussion at the initial stages of this work.

%
%
%
%

\bibliographystyle{IEEEtran}
\bibliography{mybibfile}

\begin{thebibliography}{1}
\providecommand{\url}[1]{#1}
\csname url@samestyle\endcsname
\providecommand{\newblock}{\relax}
\providecommand{\bibinfo}[2]{#2}
\providecommand{\BIBentrySTDinterwordspacing}{\spaceskip=0pt\relax}
\providecommand{\BIBentryALTinterwordstretchfactor}{4}
\providecommand{\BIBentryALTinterwordspacing}{\spaceskip=\fontdimen2\font plus
\BIBentryALTinterwordstretchfactor\fontdimen3\font minus
  \fontdimen4\font\relax}
\providecommand{\BIBforeignlanguage}[2]{{%
\expandafter\ifx\csname l@#1\endcsname\relax
\typeout{** WARNING: IEEEtran.bst: No hyphenation pattern has been}%
\typeout{** loaded for the language `#1'. Using the pattern for}%
\typeout{** the default language instead.}%
\else
\language=\csname l@#1\endcsname
\fi
#2}}
\providecommand{\BIBdecl}{\relax}
\BIBdecl

\bibitem{CEGarcia_1982_paper1}
C.~E. Garcia and M.~Morari, ``Internal model control. a unifying review and
  some new results,'' \emph{Industrial \& Engineering Chemistry Process Design
  and Development}, vol.~21, no.~2, pp. 308--323, 1982.

\bibitem{DRivera_1986_paper4}
D.~Rivera, M.~Morari, and S.~Skogestad, ``\BIBforeignlanguage{English
  (US)}{Internal model control. 4. pid controller design.}''
  \emph{\BIBforeignlanguage{English (US)}{Industrial \& Engineering Chemistry,
  Process Design and Development}}, vol.~25, no.~1, pp. 252--265, 1 1986.

\bibitem{MMorari_RobustPC_Book_1989}
M.~Morari and E.~Zatrioiu, \emph{Robust Process Control}.\hskip 1em plus 0.5em
  minus 0.4em\relax Prentice Hall, 1989.

\bibitem{IKaya_IMC_PI_GPM_2004}
I.~Kaya, ``Tuning pi controllers for stable processes with specifications on
  gain and phase margins,'' \emph{ISA Transactions}, vol.~43, no.~2, pp. 297 --
  304, 2004.

\bibitem{WKHo_IMC_PID_GPM_2001}
W.~K. Ho, T.~H. Lee, H.~P. Han, and Y.~Hong, ``Self-tuning imc-pid control with
  interval gain and phase margins assignment,'' \emph{IEEE Transactions on
  Control Systems Technology}, vol.~9, no.~3, pp. 535--541, May 2001.

\bibitem{CWChu_IMCPID_Adaptive_2011}
C.-W. Chu, B.~E. Ydstie, and N.~V. Sahinidis, ``Optimization of imc-pid tuning
  parameters for adaptive control: Part 1,'' in \emph{21st European Symposium
  on Computer Aided Process Engineering}, ser. Computer Aided Chemical
  Engineering, E.~Pistikopoulos, M.~Georgiadis, and A.~Kokossis, Eds.\hskip 1em
  plus 0.5em minus 0.4em\relax Elsevier, 2011, vol.~29, pp. 758 -- 762.

\bibitem{WKHo_PI_GPM_wellknown_1995}
W.~K. Ho, C.~C. Hang, and J.~H. Zhou, ``Performance and gain and phase margins
  of well-known pi tuning formulas,'' \emph{IEEE Transactions on Control
  Systems Technology}, vol.~3, no.~2, pp. 245--248, June 1995.

\bibitem{YCTian_CompofDom_and_VDelay_1998}
Y.-C. Tian and F.~Gao, ``Compensation of dominant and variable delay in process
  systems,'' \emph{Industrial \& Engineering Chemistry Research}, vol.~37,
  no.~3, pp. 982--986, 1998.

\end{thebibliography}
\end{document}